\documentclass[11pt]{article}
\usepackage[letterpaper,margin=1in]{geometry}
\usepackage{amsmath,amssymb,amsthm,mathtools}
\usepackage{aliascnt}
\usepackage{algorithm}
\usepackage{algpseudocode}
\usepackage{enumitem}
\usepackage{booktabs}
\usepackage{hyperref}
\usepackage[capitalize,nameinlink]{cleveref}
\usepackage{xcolor}
\usepackage{xspace}
\usepackage{tabularx}

\hypersetup{
  colorlinks=true,
  linkcolor=blue!55!black,
  citecolor=blue!55!black,
  urlcolor=blue!55!black
}

\newtheorem{theorem}{Theorem}[section]
\newaliascnt{lemma}{theorem}
\newtheorem{lemma}[lemma]{Lemma}
\aliascntresetthe{lemma}
\newaliascnt{corollary}{theorem}
\newtheorem{corollary}[corollary]{Corollary}
\aliascntresetthe{corollary}
\newaliascnt{proposition}{theorem}
\newtheorem{proposition}[proposition]{Proposition}
\aliascntresetthe{proposition}
\newaliascnt{definition}{theorem}
\newtheorem{definition}[definition]{Definition}
\aliascntresetthe{definition}
\newaliascnt{remark}{theorem}
\newtheorem{remark}[remark]{Remark}
\aliascntresetthe{remark}

\crefname{theorem}{Theorem}{Theorems}
\Crefname{theorem}{Theorem}{Theorems}
\crefname{lemma}{Lemma}{Lemmas}
\Crefname{lemma}{Lemma}{Lemmas}
\crefname{corollary}{Corollary}{Corollaries}
\Crefname{corollary}{Corollary}{Corollaries}
\crefname{proposition}{Proposition}{Propositions}
\Crefname{proposition}{Proposition}{Propositions}
\crefname{definition}{Definition}{Definitions}
\Crefname{definition}{Definition}{Definitions}
\crefname{remark}{Remark}{Remarks}
\Crefname{remark}{Remark}{Remarks}

\usepackage{thmtools}
\newcommand{\proofpart}[1]{
  \par\smallskip
  \noindent\textbf{#1}\enspace
}
\newcommand{\R}{\mathbb{R}}
\newcommand{\Z}{\mathbb{Z}}

\newcommand{\polylog}{\operatorname{polylog}}
\newcommand{\area}{\operatorname{area}}

\newcommand{\IGR}{\textnormal{\textsc{Min-cost Grid Matching}}\xspace}
\newcommand{\Ballp}{\mathsf{B}_p}
\newcommand{\BallOne}{\mathsf{B}_1}
\newcommand{\GridBallp}{\mathsf{B}_{p,\Z}}
\newcommand{\GridBallOne}{\mathsf{B}_{1,\Z}}

\newcommand{\normp}[1]{\left\lVert #1\right\rVert_{p}}
\newcommand{\normI}[1]{\left\lVert #1\right\rVert_{\infty}}
\newcommand{\normone}[1]{\left\lVert #1\right\rVert_{1}}
\newcommand{\calC}{\mathcal{C}}

\title{Rectilinear Matching to the Integer Grid in Nearly-Linear Time}
\author{Yu Gao}
\date{}

\begin{document}
\maketitle

\begin{abstract}
Rectilinear matching to the integer grid asks to assign each of $n$ points
in $\mathbb R^2$ to a distinct point of $\mathbb Z^2$, minimizing total
$\ell_1$ movement.  The main difficulty is that the target set is infinite:
one must first identify a finite set of relevant grid points without losing
optimality.

We prove a geometric compression theorem for this infinite-target problem.
In $O(n\log^2 n)$ time, we construct a set $\calC$ of asymptotically optimal size $O(n)$ such that, simultaneously for
every $p\in[1,\infty]$, some optimal $\ell_p$ assignment uses only points
of $\calC$.  The construction is independent of the subsequent optimization
algorithm and of the coordinate spread.

For the rectilinear case, we combine this candidate set with a linear-size
sparse network representation of $\ell_1$ distances.  In the word-RAM model with $O(1)$-word dyadic coordinates and $O(\log n)$ fractional bits, a nearly-linear time
minimum-cost flow algorithm then gives a randomized exact algorithm with
expected running time $\widetilde O(n)$.  This improves the standard
$\widetilde O(n^2)$ approach.  Combined with existing
finite geometric matching algorithms, the same candidate set also gives an
$\widetilde O(n\sqrt n\log(1/\varepsilon))$-time $(1+\varepsilon)$
approximation for every fixed integer $p\ge1$.
\end{abstract}

\section{Introduction}
We study a common core legalization task in many geometric optimization pipelines: the input is an indexed sequence $P=(h_1,\ldots,h_n)$. Each $h_i$ is a point in $\R^2$; repeated locations are allowed. The task is to find an injective map
\[
  \mu:[n]\hookrightarrow \Z^2
\]
minimizing
\[
  \sum_{i=1}^n \normone{h_i-\mu(i)}.
\]
The injectivity constraint ensures that no two objects are assigned to the same grid point, and the $\ell_1$ objective is the rectilinear movement cost.  The target set is the entire integer grid, not a finite set listed in the input.  Thus an exact algorithm must identify a relevant set of grid points and optimize the assignment to those points. Our main contribution is an explicit construction of such a set with $O(n)$ grid points.

We are not aware of a prior worst-case subquadratic exact algorithm
for the unrestricted infinite-grid problem defined above. The standard reduction that retains
$n$ nearest grid points per input, combined with the finite
geometric partial-matching algorithm of Agarwal, Chang, and
Xiao~\cite{AgarwalChangXiao2019}, gives an
$\widetilde O(n^2)$-time exact algorithm.  For each input point $h_i$, retain its $n$ nearest grid points, breaking ties arbitrarily.  Suppose an optimal assignment matches $h_i$ outside this list.  Every listed point is no farther from $h_i$, and the other $n-1$ inputs occupy at most $n-1$ of them.  At least one listed point is therefore unused, and reassigning $h_i$ to that point does not increase the cost.  Restricting the target set to the union of these nearest grid points thus preserves the optimal value.  This candidate set may, however, have size $N=\Theta(n^2)$.  Applying the exact geometric partial-matching algorithm~\cite{AgarwalChangXiao2019} to the resulting finite instance takes $\widetilde O(n^2+N)=\widetilde O(n^2)$ time.

Even when approximate solutions are allowed, the quadratic-size candidate set remains the main bottleneck.  A nearly-linear time approximation algorithm for finite geometric matching or transportation still takes $\Omega(n^2)$ time after this reduction because the reduced input can have size $\Omega(n^2)$ in the worst case.

We give a randomized exact algorithm running in expected $\widetilde O(n)$ time for rectilinear matching to the integer grid.  First, we show that a single explicitly constructible set of $O(n)$ grid points simultaneously contains, for every $\ell_p$ norm, the target grid points of at least one optimal assignment.  Second, for the $\ell_1$ norm, we optimize over these points without building the complete bipartite graph, using a sparse minimum-cost flow representation. Since an explicit assignment contains $n$ target grid
points, any algorithm that outputs the assignment requires
$\Omega(n)$ time.  Thus our expected $\widetilde O(n)$ running time
is optimal up to polylogarithmic factors.  For every fixed integer $p\ge 1$, we also obtain a fast approximation algorithm under the $\ell_p$ norm by applying existing geometric bipartite-matching algorithms to the input points and the linear-size candidate set.

In real-world applications, the target set is usually not literally the unbounded grid.  Nevertheless, such instances remain close to our setting when the legal locations come from a large grid with a list of exclusions or a bounding window.  We therefore include in Appendix~\ref{sec:endpoint-restrictions} exact solutions for these two target-restricted variants.

\subsection{Our Results}

Our main technical contribution is a general geometric reduction showing that the
infinite target grid can be compressed to linearly many candidate grid
points simultaneously for all $p \in [1, \infty]$.  The construction time does not depend on the coordinate spread or on $p$.

\begin{definition}
    We call the following problem \IGR.  The input is an indexed sequence
\[
  P=(h_1,\ldots,h_n)\in(\R^2)^n,
\]
where repeated locations are allowed, and a fixed $\ell_p$ norm with $p\in[1,\infty]$.  The task is to find an injective map
\[
  \mu:[n]\hookrightarrow \Z^2
\]
minimizing
\[
  \sum_{i=1}^n \normp{h_i-\mu(i)}.
\]
\end{definition}
We use \IGR for the general $\ell_p$ version.  The rectilinear problem in
the title is the $\ell_1$ special case, for which we obtain the randomized exact
nearly-linear time algorithm.
\begin{theorem}[Universal optimal-size candidate set for $\ell_p$ norms]\label{thm:general-candidate-set}
There is an algorithm that explicitly constructs a set $\calC\subseteq \Z^2$ with the following properties.
\begin{enumerate}[label=(\roman*)]
  \item For every $p\in[1,\infty]$, there is an optimal assignment $\mu_p^*$ for \IGR under $\|\cdot\|_p$ with $\mu_p^*(i)\in \calC$ for every $i\in[n]$.
  \item The size of $\calC$ satisfies
  \[
    |\calC|=O(n).
  \]
  \item The algorithm runs in time
  \[
    O(n\log^2 n).
  \]
\end{enumerate}
\end{theorem}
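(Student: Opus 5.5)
We construct $\calC$ in three steps: local neighbourhoods of the input points, grouped into clusters, padded by a multiplicity-dependent amount so that exchange arguments never leave $\calC$. The key structural fact we use is monotonicity under all $\ell_p$ norms at once. If $g,g'\in\Z^2$ satisfy $|g'_x-h_x|\le|g_x-h_x|$ and $|g'_y-h_y|\le|g_y-h_y|$, then $\normp{h-g'}\le\normp{h-g}$ for every $p\in[1,\infty]$. We therefore aim to show that some optimal assignment, for any fixed $p$, can be rearranged so that every used target is ``coordinatewise dominated-free''. This means there is no unused grid point that is coordinatewise closer to the same input. Every target of such an assignment then lies in a set we can describe combinatorially, independent of $p$, which gives property (i) uniformly in $p$.

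\textbf{Step 1 (local staircase structure).} Fix an optimal $\mu$ for a given $p$ and take, among optimal assignments, one minimizing a $p$-independent potential. A natural choice is the lexicographic vector of sorted $\ell_1$ then $\ell_\infty$ displacements, or the sum of coordinatewise displacement magnitudes. Suppose $\mu(i)=g$ and some grid point $g'$ is coordinatewise at least as close to $h_i$ and strictly closer in one coordinate. If $g'$ is free, reassigning $i$ to $g'$ does not raise the $\ell_p$ cost and strictly lowers the potential. Hence every such $g'$ is occupied. It follows that the quadrant box between the nearest lattice cell of $h_i$ and $\mu(i)$ is entirely occupied by targets. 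So the box has at most $n$ lattice points, and $\mu(i)$ lies within an $\ell_1$-ball-like staircase region around $h_i$ whose area is bounded by the number of inputs whose targets lie in it.

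\textbf{Step 2 (clustering and counting).} Group the inputs by their floor cells $(\lfloor x\rfloor,\lfloor y\rfloor)$. Merge cells into clusters whenever their ``influence radius'' balls overlap, with radius $r_c\approx\sqrt{m_c}$ for a cluster of multiplicity $m_c$. We do this iteratively, as in a union-find/quadtree merging: if two clusters with radii $r_1,r_2$ are within distance $O(r_1+r_2)$, replace them by one cluster of radius $O(\sqrt{m_1+m_2})$. Geometric growth of the radii keeps the total merged radius within a constant factor. We then set $\calC$ to be the union, over final clusters, of all grid points within distance $O(r_c)$ of the cluster. Each cluster contributes $O(r_c^2)=O(m_c)$ points, so $|\calC|=O(n)$, which is property (ii). For property (i), Step 1 shows that the occupied box for input $i$ consists of targets of inputs whose occupied boxes overlap it. A charging/area argument bounds its side by $O(\sqrt{m})$, where $m$ is the number of inputs in the connected component of overlapping boxes. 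That component sits inside one final cluster by construction, so $\mu(i)\in\calC$. Here we choose the merge constants large enough, as in a packing bound, that the component radius is dominated by the cluster radius.

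\textbf{Step 3 (algorithm).} The merging is a proximity-clustering process. We implement it with a balanced quadtree or sorted sweeps, maintaining clusters in a dynamic structure supporting nearest-cluster queries in $O(\log n)$. Each merge round takes $O(n\log n)$, and radii at least double along any chain, giving $O(\log n)$ relevant rounds and $O(n\log^2 n)$ total. Emitting the candidate points costs $O(n)$, which gives property (iii). The main obstacle is Step 2's containment claim: showing rigorously that the occupied-box components cannot escape the $O(\sqrt m)$ radius. I would attack it by a Hall-type counting argument. If the $m$ inputs of a component pushed some target to distance $R$, then the staircase of occupied points contains $\Omega(R^2)$ lattice points (a quadrant box of diameter $R$ around a lattice cell). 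These must be matched to inputs of the component, forcing $R=O(\sqrt m)$. The care needed is in making the component notion closed under the merges, so that this counting is consistent with the clusters actually built.
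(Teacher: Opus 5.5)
\textbf{There is a genuine gap in Step 2, and it breaks the size bound (ii), not only the containment claim you flag.}

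The estimate ``each cluster contributes $O(r_c^2)=O(m_c)$ points'' tacitly assumes that a cluster has diameter $O(r_c)$. Chain merging does not preserve this. For example, place $k$ inputs in each of $L$ consecutive unit cells of one row, with $k$ a constant large enough that adjacent influence balls overlap under your merge constant. Merged radii only grow, so the whole row collapses into one cluster with $m_c=n=kL$ and $r_c=\Theta(\sqrt n)$. The $O(r_c)$-neighbourhood of this length-$L$ row then contains $\Theta(L\sqrt n)=\Theta(n^{3/2})$ grid points.

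Yet every optimal assignment moves each input only $O(k)=O(1)$. If the maximum displacement is $d$, attained at $h_i$, then all of $\GridBallp(h_i,d-1)$ is occupied by inputs within distance $2d$ of $h_i$. There are only $O(kd)$ such inputs, so $d=O(k)$. If you instead measure $O(r_c)$ from a cluster centre, (i) fails, because the cluster has extent $\Theta(n)\gg r_c$.

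Your containment argument is tied to the same coarse quantity. It bounds displacements by $O(\sqrt m)$ with $m$ the size of the whole component, and that is exactly what forces the oversized neighbourhood. The claim that this component ``sits inside one final cluster by construction'' is also circular, since the components are defined through the unknown $\mu$.

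\textbf{The missing idea is that each input's radius must be driven by local density at every scale, not by the total multiplicity of a connected component.} The paper does this as follows.
\begin{enumerate}[label=(\alph*)]
  \item It keeps a per-input safe radius vector, initialized to $\Theta(\sqrt n)$.
  \item Over $O(\log n)$ rounds with halving test radii $R_t$, it resets $\rho_i\gets R_t$ whenever $R_t^2/4$ exceeds the number of $j\ne i$ with $\normI{h_i-h_j}\le R_t+\rho_j$.
  \item That count overcounts the possible occupants of $\GridBallp(h_i,R_t)$ uniformly in $p$. Hence one fixed optimal witness certifies all resets at once, and each round is a single $O(n\log n)$ range-counting sweep.
  \item The $O(n)$ bound comes from multiscale charging. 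An input that keeps radius $R_{t-1}$ either has an already high-radius supporter, so its ball lies in the previously charged region grown by $bR_t$ (a geometric series overall). Otherwise all its $\Omega(R_{t-1}^2)$ supporters lie within $\ell_\infty$-distance $2R_{t-1}$. In that case either its ball is already covered, or those supporters lie outside the previously charged region. The new square is then charged to them via a packing argument, and no input is charged twice.
\end{enumerate}

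\textbf{Two smaller issues.}
\begin{enumerate}[label=(\alph*)]
  \item Your Step 1 box with half-widths $(|g_x-h_x|,|g_y-h_y|)$ can be a thin strip, so its occupancy alone only gives displacement $O(n)$. Use instead that all of $\GridBallp(h_i,d-1)$ is occupied. Together with $\normI{x}\le\normp{x}\le2\normI{x}$, this already gives $p$-uniform constants without any tie-breaking potential.
  \item Merging a large cluster with a small one does not double its radius, so the $O(\log n)$ bound on merge rounds in Step 3 is unjustified.
\end{enumerate}
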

\begin{remark}[Tightness of the candidate-set bound]
\label{rem:candidate-set-optimality}
Every candidate set contains the target points of some injective
assignment of the $n$ inputs.  These target points are distinct,
and therefore every candidate set has cardinality at least $n$.
Consequently, the bound $|\calC|=O(n)$ in
\cref{thm:general-candidate-set} is asymptotically optimal.
The construction time $O(n\log^2 n)$ is likewise optimal up to
polylogarithmic factors in the explicit-output model.
\end{remark}

For the $\ell_1$ norm, this leads to an exact nearly-linear time algorithm.

\begin{restatable}[Rectilinear matching to the integer grid in nearly-linear time]{theorem}{LoneMainAlgorithm}\label{thm:main-algorithm}
In the word-RAM model with word size $w=\Theta(\log n)$,
suppose that there is a global parameter $B=O(\log n)$
such that every input coordinate belongs to $2^{-B}\mathbb Z$
and its exact scaled representation fits in $O(1)$ machine
words. Then there is a randomized exact algorithm that outputs an
optimal assignment in expected $\widetilde O(n)$ time.
\end{restatable}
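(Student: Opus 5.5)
The plan is to combine \cref{thm:general-candidate-set} with a sparse flow network for $\ell_1$ distances, and then call a nearly-linear time exact minimum-cost flow algorithm. Since we may use \cref{thm:general-candidate-set}, we first compute, in $O(n\log^2 n)$ time, the candidate set $\calC\subseteq\Z^2$ with $|\calC|=O(n)$. By property (i) with $p=1$, the optimum of \IGR under $\ell_1$ equals the optimum of the finite problem: find a minimum-cost matching that saturates the $n$ inputs into distinct points of $\calC$, with cost $\normone{h_i-c}$. All remaining work is to solve this finite rectilinear partial matching exactly in $\widetilde O(n)$ time, given that the bipartite graph has $\Theta(n^2)$ edges.

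For the sparse representation I would use the classical 2D divide-and-conquer (Manhattan-network style) construction. Split the multiset $P\cup\calC$ at the median $x$-coordinate. For a pair $h$, $c$ on opposite sides of the split line, $\normone{h-c}$ equals the distance from $h$ to the line, plus the distance along the line, plus the distance from the line to $c$. This holds because any monotone path crosses the line, and $|y_h-y_c|$ is realized by travelling along it. So we place Steiner vertices on the split line at the $y$-coordinates of all points in the current subproblem. We connect consecutive Steiner vertices by bidirected, uncapacitated edges whose cost is the $y$-gap. We connect each point to its projection with cost equal to its horizontal distance, with direction input$\to$line$\to$candidate. Then we recurse on both halves. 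This yields $O(n\log n)$ vertices and edges, and every directed path from $h_i$ to $c$ costs at least $\normone{h_i-c}$, by the triangle inequality since every edge cost is a true $\ell_1$ displacement. Some path achieves exactly that cost, namely through the split line at the level where $h_i$ and $c$ separate. With a source giving unit supply to each $h_i$ and a sink of demand $n$ reached from each $c\in\calC$ with capacity $1$, the min-cost flow value equals the matching optimum. An integral optimal flow decomposes into $n$ paths that induce an injective assignment, by integrality and the unit sink capacities. A linear-size variant, as in the abstract, can be obtained by compressing degree-two Steiner chains, but $O(n\log n)$ already suffices for $\widetilde O(n)$.

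Next come the numeric assumptions. Inputs lie in $2^{-B}\Z$ with $B=O(\log n)$, and the candidates are integers. Scaling by $2^B$ makes all edge costs integers of $O(1)$ words, so they are bounded by $2^{O(\log n)}$ times a polynomial only if the coordinate magnitudes allow it. To avoid dependence on the spread, I would translate each subproblem and note that the flow algorithm's running time depends on $\log$ of the maximum cost. Since the costs fit in $O(1)$ words, that is $\log U=O(\log n)$. We then invoke the randomized almost/nearly-linear time exact min-cost flow algorithm (Chen et al.\ style, with expected $\widetilde O(m\log U)$ on integer data), which gives expected $\widetilde O(n)$ total time. The resulting integral flow is converted to $\mu$ by path decomposition in $O(m)$ time.

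The main obstacle is the correctness of the sparse network: directed paths must never undercut true $\ell_1$ costs, and a single flow must not be able to "share" a path to create cheaper than feasible matchings. The key point is that the edges are uncapacitated except at candidate-to-sink arcs, so the flow is exactly a transportation problem with shortest-path costs, which equal $\ell_1$ distances. A secondary concern is that the word-RAM arithmetic meets the flow algorithm's bit-size requirements. This is handled by the $O(1)$-word and $B=O(\log n)$ hypotheses, which also cover the integer candidates produced by \cref{thm:general-candidate-set}, assuming they stay within a polynomial factor of the input coordinates.
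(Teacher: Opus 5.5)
\textbf{The gap: the flow solver.} Your finite reduction is fine. The divide-and-conquer Steiner network never undercuts $\normone{h_i-c}$, and it realizes that distance exactly through the split line where $h_i$ and $c$ separate. With unit sink arcs, an integral optimal flow gives an optimal injective assignment into $\calC$. What fails is the running time of the solver. No min-cost flow algorithm for general graphs is known to run in $\widetilde O(m\log U)$ time. The algorithm of Chen et al.\ runs in $m^{1+o(1)}$ time (times logarithmic factors in the cost and capacity bounds), and the $m^{o(1)}$ overhead is not polylogarithmic. Your argument therefore proves only an $n^{1+o(1)}$ bound. That is exactly the alternative the paper records in its appendix (a generic range-tree $\ell_1$ network plus Chen et al.), where it notes that such a generic network does not give $\widetilde O(n)$. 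Polylogarithmic overhead is currently available only for restricted graph classes, such as the separable families handled by Dong et al.

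\textbf{The missing idea: a small, provably separable network.} Your network is not evidently separable. The top-level split line is adjacent to every point, and the nested lines are cross-linked through the points, so no sublinear balanced separator is apparent, and swapping in the separable-graph solver would not obviously help. The paper instead exploits the shape of $\calC$. Since $\calC$ is the union of the lattice squares $h_i+[-\rho_i,\rho_i]^2$, and the witness $\mu^*$ satisfies $\normone{h_i-\mu^*(i)}\le\rho_i$, every lattice point $z$ on a monotone grid path from the appropriate corner $z_i$ of $h_i$'s unit cell to $\mu^*(i)$ satisfies $\normone{h_i-z}\le\rho_i$. Hence all such points lie in $\calC$. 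So the unit grid graph on $\calC$, plus $O(1)$ short horizontal and vertical segments from each $h_i$ to the corners of its cell, represents all needed $\ell_1$ distances exactly. After subdividing at intersections this is a planar graph with $O(n)$ edges. Adding the single sink $\tau$ keeps the family $1/2$-separable with $s(m)=\widetilde O(m)$, which is what lets Dong et al.\ run in expected $\widetilde O(n)$ time.

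\textbf{Two smaller issues.} First, compressing degree-two chains cannot make your network linear-size: each input has degree $\Theta(\log n)$, and each Steiner vertex has degree three. Second, explicit path decomposition is not $O(m)$, because a unit path may traverse $\Theta(n)$ Steiner vertices on a split line. The paper instead propagates source labels through the acyclic flow support in topological order, using split and concatenate on implicit treaps, in $O(m\log n)$ time.
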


We may also pass the candidate set from \cref{thm:general-candidate-set} to the approximation algorithm for finite geometric matching~\cite{AgarwalChangXiao2019}.  This immediately gives approximation algorithms for \IGR for $\ell_p$ norms with integer $p\ge 1$.

\begin{restatable}[Approximation for positive-integer $\ell_p$ norms]{corollary}{LpAlgorithm}\label{thm:lp-algorithm}
Let $p\ge1$ be a fixed integer and let $0<\varepsilon\le1$. There is an $\widetilde O(n\sqrt{n}\log(1/\varepsilon))$-time algorithm
that outputs a solution for \IGR under $\normp{\cdot}$ whose cost is at most $(1+\varepsilon)$ times the optimum.
\end{restatable}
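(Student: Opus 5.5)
The statement to prove is \cref{thm:lp-algorithm}. The plan is a direct reduction to a finite geometric bipartite matching instance. First run the algorithm of \cref{thm:general-candidate-set} on $P$. This takes $O(n\log^2 n)$ time and returns $\calC\subseteq\Z^2$ with $|\calC|=O(n)$. By property (i) with our fixed integer $p$, some optimal assignment $\mu_p^*$ maps every index into $\calC$.

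Let $\mathrm{OPT}_\calC$ be the minimum cost of an injective map $[n]\hookrightarrow\calC$, and let $\mathrm{OPT}$ be the optimum of \IGR under $\normp{\cdot}$. Every such map is feasible for \IGR, so $\mathrm{OPT}\le\mathrm{OPT}_\calC$. Conversely, $\mu_p^*$ is one of these maps, so $\mathrm{OPT}_\calC\le\mathrm{OPT}$. Hence $\mathrm{OPT}_\calC=\mathrm{OPT}$.

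Next, form the finite instance. The $n$ input points $h_1,\ldots,h_n$ are one side; repeated locations are kept as distinct copies, which is allowed since the finite algorithm works on multisets or indexed point sets. The $|\calC|=O(n)$ candidate grid points are the other side. We need a minimum-cost matching that saturates the input side, with edge costs $\normp{h_i-c}$. This is exactly the geometric partial-matching problem handled by Agarwal, Chang, and Xiao~\cite{AgarwalChangXiao2019}. For fixed $p$ their $(1+\varepsilon)$-approximation for $\ell_p$ geometric partial matching runs in $\widetilde O((n+|\calC|)^{3/2}\log(1/\varepsilon))=\widetilde O(n\sqrt n\log(1/\varepsilon))$ time.

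The approximation algorithm returns a matching $M$ of cost at most $(1+\varepsilon)\mathrm{OPT}_\calC=(1+\varepsilon)\mathrm{OPT}$. Its targets are distinct points of $\calC\subseteq\Z^2$, so reading off $\mu(i)$ from $M$ gives an injective map into $\Z^2$, i.e.\ a feasible solution with the claimed guarantee. The total time is $O(n\log^2 n)+\widetilde O(n\sqrt n\log(1/\varepsilon))$, which is dominated by the second term.

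The main obstacle is confirming that the cited finite algorithm applies in exactly this form. Three things must be checked: it handles unbalanced partial matching (only $n$ of the $O(n)$ targets are used), it supports general fixed integer $p$ rather than only $p=1,2$, and it allows duplicate input locations. If the cited result is stated only for balanced instances, I would reduce to that case. The way to do this is to add $|\calC|-n$ dummy sources whose distance to every target is zero; geometrically, this can be realized through the flow formulation rather than by placing actual points. Alternatively, I would use the transportation or partial-matching version of the result directly. Duplicate points can be handled by an infinitesimal perturbation or by supplies in the transportation formulation.
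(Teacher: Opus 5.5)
Your proposal is correct and follows the paper's route: build the $O(n)$-size candidate set $\calC$ of \cref{thm:general-candidate-set}, observe that restricting targets to $\calC$ preserves the optimum, and run the Agarwal--Chang--Xiao $(1+\varepsilon)$-approximate geometric partial-matching algorithm, whose $\widetilde O((N+k\sqrt k)\log(1/\varepsilon))$ bound with $N=O(n)$ and $k=n$ gives the claimed time. Your contingency reductions for balanced instances are unnecessary, because the paper cites that algorithm as already computing size-$k$ partial matchings, and the paper invokes it for every fixed integer $p$.
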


For the $\ell_1$ norm, the remaining ingredient of the exact algorithm is a simple sparse optimization graph.  It avoids the dense bipartite graph between input points and candidate grid points by representing $\ell_1$ distances as paths in a shared rectilinear network.
\begin{restatable}[Sparse $\ell_1$ flow reduction]{theorem}{LoneFlowReduction}\label{thm:l1-flow-reduction}
For the $\ell_1$ norm, given the input points and the candidate set of \cref{thm:general-candidate-set}, we construct a directed minimum-cost flow instance with
$O(n)$
vertices and edges.  The underlying undirected graph belongs to a $1/2$-separable graph family~\cite{DongGaoGoranciLeePengSachdevaYe2025} with separator construction time $s(m)=\widetilde O(m)$, and the instance has the same optimal value as \IGR under the $\ell_1$ norm.
\end{restatable}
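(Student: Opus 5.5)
The plan is to build a rectilinear ``Hanan-type'' grid network on the input points and the candidate set $\calC$ of \cref{thm:general-candidate-set}, so that $\ell_1$ distances between an input $h_i$ and a candidate $c\in\calC$ are realized by monotone shortest paths. Concretely, let $X$ be the set of distinct $x$-coordinates and $Y$ the set of distinct $y$-coordinates appearing among $P\cup\calC$; both have size $O(n)$ because $|\calC|=O(n)$. The full grid $X\times Y$ has quadratic size, so instead of using it directly I would use a sparse \emph{$\ell_1$ shortest-path preserver}. This is a recursive divide-and-conquer construction. Split by the median $x$-coordinate into a vertical line $L$. Project every point of the current subproblem onto $L$, adding Steiner vertices at those projections. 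Connect each point horizontally to its projection in both directions, and connect consecutive Steiner vertices along $L$ by bidirectional edges whose cost is the vertical gap. Then recurse on the two halves, alternating between vertical and horizontal splitting lines, as in the standard construction of rectilinear spanners/Steiner networks (e.g.\ Clarkson's or the $\ell_1$ ``guard line'' construction).

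The flow instance is as follows. Add a source $s$ with an arc of capacity $1$ and cost $0$ to each input node $h_i$, counting multiplicities. Add an arc of capacity $1$ and cost $0$ from each candidate node $c\in\calC$ to a sink $t$. Network edges get infinite capacity and cost equal to their $\ell_1$ length, and we demand $n$ units of flow. There are two directions to check for equality of optimal values. In one direction, the optimal assignment $\mu_1^*$ lands in $\calC$ by \cref{thm:general-candidate-set}(i). Each pair $(h_i,\mu_1^*(i))$ is then routed along a network path of length exactly $\normone{h_i-\mu_1^*(i)}$, which gives a flow of equal cost. In the other direction, since every edge cost is at least the $\ell_1$ distance between its endpoints, the triangle inequality shows every path costs at least the $\ell_1$ distance between its ends. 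Integrality of min-cost flow plus flow decomposition then yields an injective assignment whose cost is at most the flow cost, because the sink arcs have capacity $1$.

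The key distance claim is that for points $a,b$ in the same subproblem that are first separated by line $L$, the path $a\to\mathrm{proj}_L(a)\to\mathrm{proj}_L(b)\to b$ has length exactly $\normone{a-b}$. This holds because $a$ and $b$ lie on opposite sides of $L$, so the horizontal pieces sum to $|a_x-b_x|$. Size is the main obstacle. A naive recursion gives $O(n\log n)$ vertices and edges, not $O(n)$. To reach $O(n)$, I would try to charge Steiner points more carefully, for instance by deduplicating projections that share coordinates. Alternatively, I would restructure the network, exploiting the special structure of $\calC$: it presumably consists of $O(1)$ grid points near each input, or of clusters with coordinates from few rows and columns, so that candidates can be attached locally to their input's Steiner vertex. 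If neither works, $O(n\log n)$ would still suffice for \cref{thm:main-algorithm}.

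Separability is the second point to check. I would argue that the network is planar after replacing crossings by degree-$4$ Steiner vertices, where the number of crossings is kept linear by the same charging. Alternatively, the recursion tree itself supplies balanced separators: removing the $O(\sqrt{m})$ vertices on a splitting line is not automatically small, so planarity is the safer route. Planar graphs have $O(\sqrt m)$ separators computable in $\widetilde O(m)$ time, which is $1/2$-separability in the sense of \cite{DongGaoGoranciLeePengSachdevaYe2025}. The remaining vertices $s,t$ are handled by the standard convention that the family allows $O(1)$ extra apex vertices.
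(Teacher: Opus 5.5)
Your value-equality argument (integral flow, unit sink arcs, every rectilinear walk costing at least the $\ell_1$ distance of its endpoints) is fine. The network, however, does not prove the statement, for two reasons.

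\textbf{Size.} As you note, the median-split preserver has $\Theta(n\log n)$ vertices and edges, while the theorem claims $O(n)$. Your suggested repairs rest on a wrong picture of $\calC$. It is not $O(1)$ points per input; it is a union of full lattice squares $(h_i+[-\rho_i,\rho_i]^2)\cap\Z^2$, with $\rho_i$ as large as $\Theta(\sqrt n)$.

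\textbf{Separability.} The splitting lines are not small separators, and planarizing the network is not linear. Take alternating splits, and let the left half of the top split consist of the points $(-k,m+1-k)$, $k=1,\ldots,m$. The next split is a horizontal median line $y=y_0$. For every pair $j<k$ of these points lying above $y=y_0$, the horizontal connector of $(-k,m+1-k)$ to the top vertical line properly crosses the vertical connector of $(-j,m+1-j)$ to $y=y_0$. That gives $\Theta(m^2)$ crossings. So the claim that crossings stay linear ``by the same charging'' has no basis. Without it you have neither $O(n)$ size nor $1/2$-separability. The $O(n\log n)$ fallback would help \cref{thm:main-algorithm} only if separability were established.

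\textbf{The missing idea.} You need exact $\ell_1$ paths only for the pairs $(h_i,\mu^*(i))$ of one witness assignment, which satisfy $\normone{h_i-\mu^*(i)}\le\rho_i$. You do not need them for all input--candidate pairs, since every other path automatically costs at least the $\ell_1$ distance of its endpoints.

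Let $z_i$ be the corner of $h_i$'s unit cell facing $\mu^*(i)$. Every lattice point $z$ on a monotone grid path from $z_i$ to $\mu^*(i)$ satisfies $\normone{h_i-z}\le\normone{h_i-\mu^*(i)}\le\rho_i$. Because $\calC$ contains the entire lattice square of radius $\rho_i$ around $h_i$, that whole path lies in $\calC$.

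The paper therefore takes the grid graph induced on $\calC$, which has at most $2|\calC|$ unit edges. It attaches each $h_i=(x,y)$ by horizontal segments to the anchors $(\lfloor x\rfloor,y)$ and $(\lceil x\rceil,y)$, and by vertical segments from these anchors to the adjacent lattice points in $\calC$. All noncollinear intersections are segment endpoints, so the arrangement is planar with $O(n+|\calC|)=O(n)$ vertices and edges. Supplies sit directly on the input vertices, and deleting the single sink $\tau$ leaves a planar graph. This gives $1/2$-separability with $\widetilde O(m)$ separator time via the results of Dong et al.
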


We then invoke a nearly-linear time minimum-cost flow algorithm for separable graph families as a black box; its standard guarantee assumes polynomially bounded integral costs~\cite{DongGaoGoranciLeePengSachdevaYe2025}.

We note that the pruning algorithm for constructing \cref{thm:general-candidate-set} is simple to implement.  It uses only standard sorting and orthogonal range-counting data structures.  In real-world $\ell_1$ legalization instances, it can therefore serve as a lightweight front end: it reduces the infinite-grid assignment to a sparse graph that can be passed to existing minimum-cost flow solvers.  We also record two common target-restricted variants in Appendix~\ref{sec:endpoint-restrictions}.

\subsection{Technical Overview}\label{sec:overview}
We next outline the proof.
\paragraph{Safe radius vector.}
The safe radius vector $\rho$ is the main invariant: for every $p$, there exists an optimal assignment under $\|\cdot\|_p$ that assigns each input $h_j$ to a grid point within distance $\rho_j$.  Initially, we set $\rho_i=O(\sqrt n)$ because an $\ell_p$ ball of radius $\rho_i$ contains $\Theta(\rho_i^2)$ grid points, with constants uniform over $p\in[1,\infty]$, whereas the other input points can occupy at most $n-1$ of them.  The final construction uses the $p$-independent enclosing-square candidate set
\[
  \left(\bigcup_i
  \left(h_i+[-\rho_i,\rho_i]^2\right)\right)\cap \Z^2.
\]
The formal definition and initialization are given in \cref{sec:preliminaries,sec:safe-pruning}.

\paragraph{Safe pruning.}
Suppose the current radius vector is safe.  For a point $h_i$ and a test radius $r$, consider the grid points in $\Ballp(h_i,r)$.  If $h_i$ were matched farther than $r$, all those grid points would have to be used by other input points, which we call the supporters of $h_i$.  A supporter $h_j$ can use one of these grid points only if $h_j$ is close to $h_i$ at scale $r+\rho_j$.  We upper-bound the number of potential supporters using a two-dimensional range-counting computation.  If the number of grid points exceeds this upper bound, then the same witness assignment must match $h_i$ within distance $r$, and we may safely update $\rho_i$ to $r$; see \cref{sec:safe-pruning}.

As a warm-up, applying this pruning rule with $r=n^{1/4}$ and $\rho_i=n^{1/2}$ gives a candidate set of size $O(n\sqrt{n})$; see \cref{sec:warmup}.

\paragraph{Iterated pruning.}
To further reduce the number of candidates, we apply the pruning rule with geometrically decreasing test radii until the radius is constant.  Let $R_t$ denote the test radius in pruning round $t$.  If the radius of $h_i$ is updated to $R_t$, then $h_i$ contributes at most $O(R_t^2)$ candidates.  Otherwise, the square-covering estimator is large, so \(h_i\) has
many potential supporters. Either some potential supporter already has large radius, in which case $h_i$ lies near the current charged region, or all potential supporters have small radius, in which case $h_i$ lies in a new dense region at scale $O(R_{t-1})$.  We charge the latter regions to input points not charged in previous rounds; the charged sets are disjoint, so the total high-radius area is linear.  The full iterated construction and area bound are in \cref{sec:iterated-pruning}, and they give \cref{thm:general-candidate-set}.

\paragraph{The $\ell_1$ flow reduction.}
For the $\ell_1$ norm, the distance from a real input point to an integer grid point decomposes through the boundary of the unit grid cell containing the input point, and a shortest path in the grid graph.  This reduction is proved in \cref{sec:l1-flow} and gives the flow instance of \cref{thm:l1-flow-reduction}.

\paragraph{Target restrictions.}
The same framework also handles two common restrictions on target grid points.  A finite set of unavailable grid points is handled by adding dummy points and applying an exchange argument that places every dummy at its designated forbidden site.  For a sufficiently thick rectangular window, the lower bounds on grid-point counts used by pruning still hold inside the window, and the $\ell_1$ flow graph is restricted to available candidates.  These variants are proved in Appendix~\ref{sec:endpoint-restrictions}.

\subsection{Related Work}

\paragraph{Applications.}

Many geometric optimization pipelines first compute preferred locations in
the plane and only later enforce a discrete feasibility constraint.  A direct
example arises in VLSI physical design: in hybrid-bonding and related 3D-IC
flows, bonding terminals, bumps, or pads must be assigned to available
manufacturing grid sites, with no site used more than once, while keeping
their displacement from routed or planned locations small
~\cite{ShiGaoRenXieXuXueYuanQianZhou2025,HuangPentapatiAgnesinaBrunionLim2024,
VannaIampikulYoonParkYeapLim2025}.
Existing legalization approaches use force-directed and other heuristics, or
formulate a terminal-to-site assignment problem and invoke a general-purpose
matching or optimization solver.  Such approaches can be effective in
practice, but they either sacrifice exactness or do not exploit the grid
geometry to obtain a worst-case nearly-linear running time.  Related
assignment abstractions arise in standard-cell legalization, where cells are
moved from continuous placement locations to nonoverlapping legal positions
in rows while limiting displacement
~\cite{HougardyNeuwohnerSchorr2021}, and in grid-map visualization, where
representatives of geographic regions are assigned one-to-one to grid cells
while minimizing Manhattan or related distances
~\cite{EppsteinKreveldSpeckmannStaals2015}.  Motivated by this common core,
we study \IGR under $\ell_1$ and other $\ell_p$ norms.

\paragraph{Finite Geometric Matching and Transportation.}
Finite geometric matching assumes that both sides of the matching instance are explicit point sets.  Agarwal, Chang, and Xiao~\cite{AgarwalChangXiao2019} give an exact $O((N+k^2)\polylog N)$-time algorithm for planar geometric partial matching with finite input size $N$ and matching size $k$.  After the nearest-grid reduction, their algorithm takes $\widetilde O(n^2)$ time for rectilinear \IGR.

For approximation, Agarwal et al. and Fox and Lu~\cite{AgarwalChangRaghvendraXiao2022,FoxLu2020} give $O(n\log^{O(d)}n\,\varepsilon^{-O(d)})$-time algorithms for finite geometric matching and transportation in $\R^d$.  The planar approximation algorithm of Agarwal, Chang, and Xiao~\cite{AgarwalChangXiao2019} computes a $(1+\varepsilon)$-approximate geometric matching of size $k$ in $\widetilde O((n+k\sqrt{k})\log(1/\varepsilon))$ time.  Combined with the $O(n)$-size candidate set from \cref{thm:general-candidate-set}, it gives an $\widetilde O(n\sqrt{n}\log(1/\varepsilon))$-time approximation algorithm for \IGR under $\normp{\cdot}$ for integer $p\ge 1$.  Sharathkumar and Agarwal~\cite{SharathkumarAgarwal2012} solve $\ell_1$ and $\ell_\infty$ geometric bipartite matching in $O(n\sqrt{n}\log^{d+O(1)}n\log\Delta)$ time in $\R^d$ when the input points have diameter $\Delta$.

Under the $\ell_1$ norm, finite geometric partial matching 
also admits a generic sparse minimum-cost flow formulation.   Given point sets
$A$ and $B$, four two-dimensional dominance range trees produce a directed
network of size
\[
  O\bigl(|B|\log |B|+|A|\log^2 |B|\bigr)
\]
that represents every pair $(a,b)\in A\times B$ by a path of cost
$\|a-b\|_1$.  When the coordinates can be scaled to polynomially bounded
integers, the almost-linear-time minimum-cost flow algorithm of
Chen et al.~\cite{ChenKyngLiuPengGutenbergSachdeva2022} therefore gives an
exact $(|A|+|B|)^{1+o(1)}$-time algorithm.  Combined with
\cref{thm:general-candidate-set}, this yields an alternative randomized exact
$n^{1+o(1)}$-time algorithm for \IGR under polynomially bounded coordinate
diameter.  Unlike our main $\ell_1$ reduction, however, this generic construction
is not linear-size and does not expose the separable structure needed for our
$\widetilde O(n)$ bound. Since this construction is not needed for the main result, we give the construction and proof in
Appendix~\ref{sec:range-tree-flow}.
\paragraph{Grid Rounding and Snapping.}
Snap rounding and grid snapping also move geometric objects to a grid, but their objectives are different.  Snap rounding converts segment arrangements into fixed-precision representations while maintaining robust geometric structure~\cite{GoodrichGuibasHershbergerTanenbaum1997,GuibasMarimont1998,Hobby1999}, and iterated snap rounding strengthens separation guarantees~\cite{HalperinPacker2002}.  Grid snapping has also been studied for graph drawings; L\"offler, van Dijk, and Wolff~\cite{LoefflerDijkWolff2016} show NP-hardness for several objectives when the embedding of a planar graph must be preserved.  These problems involve arrangements or graph-topological constraints rather than a min-cost injective assignment of independent points to grid sites.

\paragraph{Planar Minimum-cost Flow.}
Dong et al.~\cite{DongGaoGoranciLeePengSachdevaYe2025} give a nearly-linear time algorithm for minimum-cost flow in planar graphs with polynomially bounded integer costs and capacities, and their framework extends to separable graph families.  We use their algorithm after the $\ell_1$ reduction has produced an $O(n)$-edge separable instance; our candidate-set theorem is independent of this flow solver.

\section{Preliminaries}\label{sec:preliminaries}
Throughout the paper, subscripts on asymptotic notation indicate which quantities the hidden constants may depend on; for example, $O_D(\cdot)$ may depend on $D$. We use $\widetilde O(\cdot)$ to hide polylogarithmic factors.

\paragraph{Computational Model.}
We work in the word-RAM model with word size
$w=\Theta(\log n)$. There is a global nonnegative precision parameter
$B=O(\log n)$ such that every input coordinate belongs to
$2^{-B}\mathbb Z$ and its scaled integer representation fits in
$O(1)$ machine words. Coordinates are manipulated as exact dyadic
rationals; no rounded floating-point arithmetic is used.

Arithmetic and comparisons on $O(1)$-word integers, bit shifts, and
floor and ceiling operations on dyadic rationals take constant time.
All intermediate integers generated by the algorithm fit in $O(1)$
machine words.
\paragraph{Coordinate Normalization.}
The assumption that the integer part of every coordinate is representable
using $O(1)$ machine words does not restrict the combinatorial problem, up
to a preprocessing step whose running time is linear in the input encoding. Intuitively, we may partition the input points into components by connecting pairs of input points within $O(n)$ distance of each other. Each component can be processed separately with $O(\log n)$-bit integer part per input coordinate. We defer the details to \cref{sec:coordinate-normalization}.

\subsection{Norms and Balls}

Fix $p\in[1,\infty]$, and let $\|\cdot\|_p$ be the corresponding $\ell_p$ norm on $\R^2$.  We write
\[
  \Ballp(h,r)=\{x\in\R^2: \normp{x-h}\le r\}.
\]
We also write
\[
  \GridBallp(h,r)=\Ballp(h,r)\cap \Z^2.
\]

For a set $Y\subseteq\R^2$ and a radius $r\ge 0$, we write
\[
  \mathsf N_r(Y)
  =
  \{x\in\R^2:\operatorname{dist}_\infty(x,Y)\le r\}
\]
for the $\ell_\infty$-neighborhood of $Y$. $\mathsf N_r(\emptyset)$ is defined as $\emptyset$.

\begin{lemma}[Uniform norm comparison]\label{lem:norm-comparison}
For every $p\in[1,\infty]$ and every $x\in\R^2$,
\[
  \normI{x}\le \normp{x}\le 2\normI{x}.
\]
The factor $2$ is the smallest constant that works uniformly over all such $p$.
\end{lemma}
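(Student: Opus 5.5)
The plan is to prove the two inequalities separately, using only the elementary monotonicity of $\ell_p$ norms in $p$, and then to show sharpness by exhibiting a single vector for which the upper bound is attained.

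\emph{Lower bound.} Write $x=(x_1,x_2)$ and let $M=\normI{x}=\max(|x_1|,|x_2|)$. For $p=\infty$ there is nothing to prove. For finite $p$,
\[
\normp{x}^p=|x_1|^p+|x_2|^p\ge M^p,
\]
since one of the two summands equals $M^p$ and the other is nonnegative. Taking $p$-th roots gives $\normI{x}\le\normp{x}$.

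\emph{Upper bound.} Each summand is at most $M^p$, so $\normp{x}^p\le 2M^p$ and hence $\normp{x}\le 2^{1/p}M$. Because $p\ge1$, we have $2^{1/p}\le 2$, which yields $\normp{x}\le 2\normI{x}$. For $p=\infty$ the bound is trivial. (One could instead appeal to the monotonicity $\normp{x}\le\normone{x}$, but the direct estimate above already suffices.)

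\emph{Optimality of the constant.} It remains to check that no constant $c<2$ works for all $p$. Take $p=1$ and $x=(1,1)$. Then $\normone{x}=2$ and $\normI{x}=1$, so any uniform constant must be at least $2$. The lower constant $1$ is likewise attained, for instance by $x=(1,0)$, for which all $\ell_p$ norms coincide.

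This is a routine argument, and the only point requiring care is the $p=\infty$ case, which must be handled separately as noted above.
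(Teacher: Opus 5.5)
Your proof is correct and follows the paper's argument in substance, including the same extremal example $p=1$, $x=(1,1)$ for optimality of the factor $2$. The only difference is that you bound the upper side directly by $2^{1/p}\normI{x}$, whereas the paper uses the chain $\normp{x}\le\normone{x}\le 2\normI{x}$; your version has the small bonus of showing that, for each fixed $p$, the constant $2^{1/p}$ already suffices, and it is attained at $x=(1,1)$.
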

\begin{proof}
For every $p\in[1,\infty]$,
\[
  \normI{x}\le \normp{x}\le \normone{x}\le 2\normI{x}.
\]
For $p=1$ and $x=(1,1)$, the last inequality is an equality, proving
uniform optimality of the factor $2$.
\end{proof}

\begin{lemma}[Grid points in norm balls]\label{lem:lattice-growth}
For every $p\in[1,\infty]$, every $h\in\R^2$, and every $r\ge2$,
\[
  \frac{r^2}{4} \le |\GridBallp(h,r)| \le 9r^2.
\]
\end{lemma}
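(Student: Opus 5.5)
By \cref{lem:norm-comparison}, every $\ell_p$ ball is sandwiched between two squares:
\[
  \Ballp(h,r/2)\subseteq h+[-r/2,r/2]^2 \subseteq \Ballp(h,r)\subseteq h+[-r,r]^2 .
\]
Here the first inclusion holds because $\normp{x}\le 2\normI{x}\le r$ whenever $\normI{x}\le r/2$, and the last because $\normI{x}\le\normp{x}$. The plan is therefore to reduce both bounds to counting integer points in axis-parallel squares. For such a square the count factors as a product of one-dimensional counts.

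\textbf{Upper bound.} The closed interval $[a-r,a+r]$ contains at most $\lfloor 2r\rfloor+1\le 2r+1$ integers. Hence
\[
  |\GridBallp(h,r)|\le |(h+[-r,r]^2)\cap\Z^2|\le(2r+1)^2 .
\]
For $r\ge2$ we have $1\le r/2$, so $(2r+1)^2\le(5r/2)^2=\tfrac{25}{4}r^2\le 9r^2$.

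\textbf{Lower bound.} A closed interval of length $L$ contains at least $\lfloor L\rfloor\ge L-1$ integers. Taking $L=r$, the square $h+[-r/2,r/2]^2$ contains at least $\lfloor r\rfloor^2\ge (r-1)^2$ grid points. For $r\ge2$ we have $r-1\ge r/2$, so $(r-1)^2\ge r^2/4$. Since this square lies inside $\Ballp(h,r)$, we get $|\GridBallp(h,r)|\ge r^2/4$.

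The only point needing care is the integer-count estimate for closed intervals with arbitrary real endpoints. The lower estimate $\lfloor L\rfloor$ holds because among any $\lfloor L\rfloor$ consecutive unit-spaced integers starting at $\lceil a\rceil$, the last one is $\lceil a\rceil+\lfloor L\rfloor-1< a+L$. The constants $1/4$ and $9$ are loose, so no further obstacle is expected, and the bounds are uniform in $p\in[1,\infty]$ because only \cref{lem:norm-comparison} was used.
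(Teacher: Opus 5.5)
Your proof is correct and follows the paper's argument: you sandwich $\Ballp(h,r)$ between $h+[-r/2,r/2]^2$ and $h+[-r,r]^2$ via \cref{lem:norm-comparison} and count integers per coordinate, differing only cosmetically in the constants ($\lfloor r\rfloor\ge r-1\ge r/2$ and $2r+1\le 5r/2$, versus the paper's $\lfloor r\rfloor\ge r/2$ and $2r+1\le 3r$). One small slip: the justification you attach to ``the first inclusion'' ($\normp{x}\le 2\normI{x}$) actually proves the second inclusion $h+[-r/2,r/2]^2\subseteq\Ballp(h,r)$, while the unneeded first inclusion $\Ballp(h,r/2)\subseteq h+[-r/2,r/2]^2$ follows instead from $\normI{x}\le\normp{x}$.
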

\begin{proof}
By \cref{lem:norm-comparison},
\[
  h+[-r/2,r/2]^2\subseteq \Ballp(h,r)
  \subseteq h+[-r,r]^2.
\]
Each coordinate interval of the inner square has length $r$ and hence
contains at least $\lfloor r\rfloor\ge r/2$ integers.  Each coordinate
interval of the outer square has length $2r$ and contains at most
$2r+1\le3r$ integers.  Squaring these two bounds proves the claim.
\end{proof}

\begin{lemma}[$\ell_\infty$ packing bound]\label{lem:linf-packing}
Let $X\subset\R^2$ be a set whose points are pairwise more than $s>0$ apart in $\ell_\infty$.  Then every $\ell_\infty$-ball of radius $R$ contains at most $O((1+R/s)^2)$ points of $X$.
\end{lemma}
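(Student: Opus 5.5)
The plan is a standard volume (area) argument using disjoint $\ell_\infty$-balls of radius $s/2$ centred at the points of $X$. Fix an $\ell_\infty$-ball $Q=c+[-R,R]^2$ and let $X_Q=X\cap Q$. For each $x\in X_Q$ consider the open square
\[
  S_x=x+(-s/2,s/2)^2 .
\]
These squares are pairwise disjoint. Indeed, if $y\in S_x\cap S_{x'}$, then by the triangle inequality for $\normI{\cdot}$ we get $\normI{x-x'}<s$, which contradicts the assumption that distinct points of $X$ are more than $s$ apart.

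Next I check that every square $S_x$ with $x\in X_Q$ lies inside the enlarged square $c+[-(R+s/2),R+s/2]^2$. This holds because $\normI{x-c}\le R$, so every point of $S_x$ is within $\ell_\infty$-distance $R+s/2$ of $c$.

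Comparing Lebesgue measure then gives
\[
  |X_Q|\cdot s^2 \le (2R+s)^2 ,
\]
and hence
\[
  |X_Q|\le (1+2R/s)^2\le 4(1+R/s)^2=O\bigl((1+R/s)^2\bigr).
\]

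This bound is finite, so the argument also shows that $X_Q$ cannot be infinite: any finite subset of $X_Q$ obeys the same bound. There is no real obstacle here. The only point needing care is to use open squares of half-width $s/2$, or equivalently the strict separation $>s$, so that disjointness holds exactly and the measure comparison is clean.
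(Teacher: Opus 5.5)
Your proof is correct and is essentially the paper's argument: disjoint $\ell_\infty$-balls of radius $s/2$ around the points, all contained in the concentric ball of radius $R+s/2$, followed by an area comparison. Your version simply makes the constant explicit, giving $|X_Q|\le(1+2R/s)^2\le 4(1+R/s)^2$, and handles finiteness of $X_Q$ cleanly.
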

\begin{proof}
Place an $\ell_\infty$-ball of radius $s/2$ around each point of $X$.  These balls are interior-disjoint.  If their centers lie in an $\ell_\infty$-ball of radius $R$, then the balls themselves lie in the concentric $\ell_\infty$-ball of radius $R+s/2$.  Comparing areas gives $O((1+R/s)^2)$ centers.
\end{proof}

\subsection{Safe Radius Vectors}
As the target set $\Z^2$ is infinite, we need to formally prove the existence of an optimum assignment.
\begin{lemma}[Existence of an optimum]
Every instance of \IGR has an optimal assignment.
\end{lemma}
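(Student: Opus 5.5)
The plan is to reduce the infinite problem to a finite one with the nearest-grid-point exchange argument from the introduction, and then use finiteness. For each $i\in[n]$, let $L_i\subseteq\Z^2$ be a set of $n$ grid points nearest to $h_i$ in $\normp{\cdot}$, with ties broken arbitrarily. Such a set exists because $\GridBallp(h_i,r)$ is finite for every $r$ and, by \cref{lem:lattice-growth}, contains at least $n$ points once $r\ge 2\sqrt n$. Let $D_i=\max_{g\in L_i}\normp{h_i-g}$, so every grid point outside $L_i$ is at distance at least $D_i$ from $h_i$. Set $\calC_0=\bigcup_i L_i$, a finite set.

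\textbf{Step 1: exchange.} Take any injective $\mu:[n]\hookrightarrow\Z^2$. While some $\mu(i)\notin L_i$, do the following. The other $n-1$ indices occupy at most $n-1$ points of $L_i$, so some $g\in L_i$ is unused. Redefine $\mu(i)=g$. This keeps $\mu$ injective and does not increase the cost, since $\normp{h_i-g}\le D_i\le\normp{h_i-\mu(i)}$. Each step strictly increases the number of indices $j$ with $\mu(j)\in L_j$, because index $i$ becomes good and no other index changes. So the process terminates after at most $n$ steps. The result is an injective map into $\calC_0$ whose cost is at most the cost of $\mu$.

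\textbf{Step 2: finiteness.} There are finitely many injective maps $[n]\to\calC_0$, and at least one exists because $|L_1|\ge n$ already. Let $\mu^*$ be one of minimum cost among them. By Step~1, every injective map into $\Z^2$ has cost at least the cost of some map into $\calC_0$, which is at least the cost of $\mu^*$. Hence $\mu^*$ attains the infimum over all of $\Z^2$, so it is optimal.

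There is no real obstacle here. The only point needing care is that the exchange loop terminates, and the strictly increasing count of indices satisfying $\mu(j)\in L_j$ settles this. Equivalently, one could argue that every assignment of cost at most $c_0$, the cost of a fixed assignment such as the one into $L_1$, maps into the finite set $\bigcup_i\GridBallp(h_i,c_0)$, and then take a minimum over this finite family.
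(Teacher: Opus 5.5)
Your proof is correct, but it takes a different route from the paper's.

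The paper uses a cost sublevel-set argument. It fixes any feasible $\mu_0$ of cost $U$ and observes that every assignment of cost at most $U$ sends $h_i$ into the finite set $\GridBallp(h_i,U)$. Only finitely many assignments therefore compete, and the minimum is attained. This is exactly the alternative you sketch in your last sentence.

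Your main argument instead runs the nearest-grid exchange from the introduction. You show that every injective assignment can be turned, without increasing cost, into one with $\mu(i)\in L_i$ for all $i$. You then minimize over the finitely many injective maps into $\calC_0=\bigcup_i L_i$. Your termination argument is sound, since each reassignment makes index $i$ good and leaves every other index unchanged.

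The paper's version is shorter. It needs only finiteness of lattice points in bounded balls, with no appeal to \cref{lem:lattice-growth} and no termination argument. You also do not really need \cref{lem:lattice-growth}, since any sufficiently large ball contains at least $n$ grid points.

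Your version does more work but yields more. It produces a fixed finite set of size at most $n^2$ that does not depend on an initial assignment. It shows that this set is a candidate set in the sense of \cref{def:candidate-set}. It thereby also rigorously justifies the $\Theta(n^2)$-candidate reduction that the introduction describes informally.
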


\begin{proof}
Choose any feasible injective assignment $\mu_0$ and let
\[
  U=\sum_{i=1}^n \|h_i-\mu_0(i)\|_p.
\]
Any assignment of cost at most $U$ assigns each $h_i$ to a grid
point in $\Ballp(h_i,U)$.  Each such ball contains only finitely
many grid points.  Hence there are only finitely many injective
assignments of cost at most $U$, and the minimum among them is
attained.
\end{proof}
\begin{definition}[Safe radius vector]\label{def:safe-radius}
A vector $\rho=(\rho_1,\ldots,\rho_n)\in \R^n_{\ge 0}$ is safe for $P$ under $\|\cdot\|_p$ if there exists an optimal assignment $\mu^*$ such that
\[
  \normp{h_i-\mu^*(i)}\le \rho_i
  \qquad\text{for every }i\in[n].
\]
Such an assignment is called a witness for $\rho$.
\end{definition}

\begin{definition}[Candidate set]\label{def:candidate-set}
A set $\calC\subseteq\Z^2$ is a candidate set for \IGR if there exists an optimal assignment $\mu^*$ such that $\mu^*(i)\in\calC$ for every $i\in[n]$.
\end{definition}

\section{Safe Pruning}\label{sec:safe-pruning}

This section proves the basic pruning rule used by both the warm-up and the iterated-pruning algorithm.  We first show that the algorithm can start from a uniform safe radius.  We then introduce the square-covering estimator, which is the efficiently computable overcount used in the pruning test.  Finally, we prove that every successful pruning step preserves safety.

\subsection{Initialization}

The pruning procedure needs an initial safe vector.  The following lemma shows that a global radius of order $\sqrt n$ always suffices: if an optimal assignment sends one point farther than this, then all closer grid points must already be occupied.

\begin{lemma}[Initial safe radius]\label{lem:global-safe-radius}
For every $p\in[1,\infty]$, the vector $\rho_i=3\sqrt n$ is safe under $\|\cdot\|_p$.
\end{lemma}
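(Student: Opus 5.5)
We take an optimal assignment $\mu^*$, which exists by the existence lemma, and show that it can be modified, without raising the cost, so that every input lands within distance $3\sqrt n$. The tool is an exchange argument combined with the counting bound of \cref{lem:lattice-growth}.

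Among all optimal assignments, choose $\mu^*$ to minimize the number of indices $i$ with $\normp{h_i-\mu^*(i)}>3\sqrt n$. Suppose some index $i$ has $\normp{h_i-\mu^*(i)}>r$ with $r=3\sqrt n$. If $n\ge1$ then $r\ge3\ge2$, so \cref{lem:lattice-growth} applies and gives
\[
  |\GridBallp(h_i,r)|\ge r^2/4=9n/4>n-1 .
\]
The other $n-1$ inputs occupy at most $n-1$ grid points. Hence there is a point $g\in\GridBallp(h_i,r)$ not in the image of $\mu^*$.

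Define $\mu'$ by setting $\mu'(i)=g$ and $\mu'(j)=\mu^*(j)$ for $j\ne i$. This map is injective because $g$ was unused. Its cost changes by
\[
  \normp{h_i-g}-\normp{h_i-\mu^*(i)}<r-r=0 ,
\]
so it is strictly cheaper. This contradicts the optimality of $\mu^*$. Therefore no such $i$ exists, $\mu^*$ is a witness, and the vector $\rho_i=3\sqrt n$ is safe. The strict decrease means the minimality choice of $\mu^*$ is not even needed; any optimal assignment is already a witness.

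The only care point is the boundary of the ball. A point at distance exactly $r$ is allowed by the definition of a safe vector, which uses the condition $\le\rho_i$. So only inputs at distance strictly greater than $r$ need to be handled, and for those the improvement is strict. The rest is checking the hypothesis $r\ge2$ of \cref{lem:lattice-growth} and the inequality $9n/4>n-1$, both of which are immediate. The argument does not depend on $p$ anywhere, so the conclusion holds for every $p\in[1,\infty]$.
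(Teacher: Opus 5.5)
Your proof is correct and follows essentially the same route as the paper: an optimal assignment must occupy every grid point strictly closer to $h_i$ than $\mu(i)$, and \cref{lem:lattice-growth} shows there are too many such points for the other $n-1$ inputs. The only cosmetic difference is that the paper counts grid points in $\GridBallp(h_i,d-1)$ for the actual displacement $d$ and solves $(d-1)^2/4\le n$ to get $d\le 2\sqrt n+1\le 3\sqrt n$, whereas you test the fixed radius $3\sqrt n$ directly by contradiction (and, as you note, selecting $\mu^*$ by minimality is superfluous).
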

\begin{proof}
Fix $p$ and let $\mu$ be any optimal assignment.  Suppose that
$\normp{h_i-\mu(i)}=d$.  If $d\le3$, then $d\le3\sqrt n$.  Otherwise,
every grid point in $\GridBallp(h_i,d-1)$ must be used by $\mu$: if such
a point were unused, reassigning $h_i$ to it would preserve injectivity
and strictly reduce the cost.  Hence $|\GridBallp(h_i,d-1)|\le n$.
Since $d-1>2$, \cref{lem:lattice-growth} gives
\[
  \frac{(d-1)^2}{4}\le n.
\]
Thus $d\le2\sqrt n+1\le3\sqrt n$.  This holds for every $i$, so $\mu$
witnesses the claimed safe vector.  As $p$ was arbitrary, the same radius
bound is valid uniformly over $p\in[1,\infty]$.
\end{proof}

\subsection{The Square-Covering Estimator}

The pruning rule intends to compare the number of grid points near $h_i$ with the number of other input points that could possibly use those grid points. We call such input points potential supporters of $h_i$ at radius $r$.  Computing the number of potential supporters exactly for an arbitrary $\ell_p$ norm is unnecessary.  Instead, we use a simple overcount based on axis-aligned squares.  This is the only place where the algorithm uses a geometric data structure.

Let $\rho$ be a radius vector and let $r\ge 0$.  For every input point $h_j$, define the $\ell_\infty$-square
\[
  Q_j^\rho(r)=h_j+[-(r+\rho_j),r+\rho_j]^2.
\]
For each input point $h_i$, define the square-covering estimator
\[
  \widehat M_i^\rho(r)
  =
  \left|\{j\in[n]\setminus\{i\}:h_i\in Q_j^\rho(r)\}\right|.
\]

\begin{lemma}[Square-covering overcount]\label{lem:estimator-overcount}
Assume that $z\in\GridBallp(h_i,r)$ and $\normp{h_j-z}\le \rho_j$.  Then $h_i\in Q_j^\rho(r)$.  Therefore $\widehat M_i^\rho(r)$ is an upper bound on the number of potential supporters of $h_i$ under the radius vector $\rho$.
\end{lemma}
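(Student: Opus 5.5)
The plan is a direct triangle-inequality argument carried out in the $\ell_\infty$ norm, using \cref{lem:norm-comparison} to move between $\ell_p$ and $\ell_\infty$. From $z\in\GridBallp(h_i,r)$ we have $\normp{h_i-z}\le r$, and by the left inequality of \cref{lem:norm-comparison} this gives $\normI{h_i-z}\le r$. Likewise $\normp{h_j-z}\le\rho_j$ gives $\normI{h_j-z}\le\rho_j$.

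Applying the triangle inequality in $\ell_\infty$ then yields
\[
  \normI{h_i-h_j}\le \normI{h_i-z}+\normI{z-h_j}\le r+\rho_j ,
\]
which is exactly the condition $h_i\in h_j+[-(r+\rho_j),r+\rho_j]^2=Q_j^\rho(r)$. This proves the first claim. The only point to note is that we need $\normI{\cdot}\le\normp{\cdot}$, the direction of \cref{lem:norm-comparison} that holds uniformly in $p$ with constant $1$. This is why the squares use the unscaled radius $r+\rho_j$ and no factor of $2$ appears.

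For the second claim, we make the notion of potential supporter precise as in the overview: $j\ne i$ is a potential supporter of $h_i$ at radius $r$ under $\rho$ if some grid point $z\in\GridBallp(h_i,r)$ satisfies $\normp{h_j-z}\le\rho_j$, meaning $h_j$ could occupy $z$ in a witness for $\rho$. By the first part, every such $j$ satisfies $h_i\in Q_j^\rho(r)$. So the set of potential supporters is contained in $\{j\in[n]\setminus\{i\}: h_i\in Q_j^\rho(r)\}$, whose cardinality is $\widehat M_i^\rho(r)$. Comparing cardinalities gives the bound.

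There is no real obstacle here. The only care needed is to fix the definition of potential supporter so that it exactly matches the hypothesis of the first claim, and to use the correct direction of the norm comparison.
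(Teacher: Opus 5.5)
Your proposal is correct and follows the paper's argument: triangle inequality plus $\normI{\cdot}\le\normp{\cdot}$ for the containment, then set inclusion of potential supporters into the counted indices. The only difference is order: the paper applies the triangle inequality in $\ell_p$ and then passes to $\ell_\infty$, while you convert both bounds to $\ell_\infty$ first, which is equivalent.
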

\begin{proof}
The triangle inequality gives
\[
  \normp{h_i-h_j}\le \normp{h_i-z}+\normp{z-h_j}\le r+\rho_j.
\]
Since $\normI{x}\le \normp{x}$, we have
\[
  \normI{h_i-h_j}\le r+\rho_j.
\]
Equivalently, $h_i\in Q_j^\rho(r)$.  Thus every potential supporter is counted by $\widehat M_i^\rho(r)$; the estimator may count additional indices, which is safe for pruning.
\end{proof}

\subsection{The Pruning Rule}

With this overcount in place, we can prove the pruning rule.  The key point is that a single optimal assignment witnesses safety.  If there are enough grid points near $h_i$ for the other input points to occupy, then the same witness must already assign $h_i$ nearby.

\begin{lemma}[Safe pruning step]\label{lem:safe-pruning}
Let $\rho$ be a safe radius vector under $\|\cdot\|_p$.  Let $r\ge2$.  If
\[
  \frac{r^2}{4} > \widehat M_i^\rho(r),
\]
then the vector $\rho'$ defined by
\[
  \rho'_i=r,
  \qquad
  \rho'_j=\rho_j \quad\text{for }j\ne i
\]
is safe.
\end{lemma}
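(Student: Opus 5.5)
The plan is to show that the same witness $\mu^*$ for $\rho$ already witnesses $\rho'$. Fix a witness $\mu^*$ for $\rho$. Then $\mu^*$ is optimal and satisfies $\normp{h_j-\mu^*(j)}\le\rho_j$ for every $j$. Because $\rho'$ agrees with $\rho$ off index $i$, it suffices to prove $\normp{h_i-\mu^*(i)}\le r$. We argue by contradiction and suppose $\normp{h_i-\mu^*(i)}>r$.

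First, every grid point of $\GridBallp(h_i,r)$ is occupied by $\mu^*$. If some $z\in\GridBallp(h_i,r)$ were unused, reassigning $h_i$ to $z$ would keep the assignment injective. It would also strictly decrease the cost, since $\normp{h_i-z}\le r<\normp{h_i-\mu^*(i)}$, contradicting optimality. Moreover, $\mu^*(i)\notin\GridBallp(h_i,r)$. So each $z\in\GridBallp(h_i,r)$ equals $\mu^*(j)$ for some $j\ne i$, and injectivity makes the map $z\mapsto j$ injective.

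Next, we show that every such $j$ is counted by the estimator. It satisfies $z\in\GridBallp(h_i,r)$ and $\normp{h_j-z}\le\rho_j$ by the witness property. By \cref{lem:estimator-overcount}, $h_i\in Q_j^\rho(r)$, so $j$ is counted in $\widehat M_i^\rho(r)$. Hence $|\GridBallp(h_i,r)|\le\widehat M_i^\rho(r)$.

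Finally, since $r\ge2$, \cref{lem:lattice-growth} gives $|\GridBallp(h_i,r)|\ge r^2/4>\widehat M_i^\rho(r)$, a contradiction. Therefore $\mu^*$ is a witness for $\rho'$.

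There is no real obstacle. The one point needing care is using a single fixed witness, so that the supporters' radius bounds and the optimality exchange refer to the same assignment. The argument also needs strict cost decrease, which is why we assume $\normp{h_i-\mu^*(i)}>r$ rather than $\ge r$.
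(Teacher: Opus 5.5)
Your proposal is correct and follows essentially the same argument as the paper: fix a single witness $\mu^*$, observe that if $h_i$ were matched farther than $r$ then every point of $\GridBallp(h_i,r)$ is occupied by a distinct index $j\ne i$ with $\normp{h_j-\mu^*(j)}\le\rho_j$, count these indices via \cref{lem:estimator-overcount}, and contradict the lower bound of \cref{lem:lattice-growth}. The only difference is cosmetic: you argue by contradiction from the start, whereas the paper first handles the case $\normp{h_i-\mu(i)}\le r$ directly.
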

\begin{proof}
Let $\mu$ be an optimal assignment witnessing $\rho$.  If $\normp{h_i-\mu(i)}\le r$, then $\mu$ already witnesses $\rho'$.  Otherwise, every grid point in $\GridBallp(h_i,r)$ must be used by an input point other than $h_i$; if one were unused, assigning $h_i$ to it would strictly reduce the cost of $\mu$.

For each $z\in\GridBallp(h_i,r)$, let $h_j$ be the input point assigned to $z$ by $\mu$.  Then $j\ne i$ and $\normp{h_j-z}\le \rho_j$, because $\mu$ witnesses the safety of $\rho$.  By \cref{lem:estimator-overcount}, this index $j$ is counted by $\widehat M_i^\rho(r)$.  Since $\mu$ is injective, distinct grid points $z$ give distinct indices $j$.  Therefore
\[
  |\GridBallp(h_i,r)|\le \widehat M_i^\rho(r).
\]
This contradicts \cref{lem:lattice-growth} and the assumption
$r^2/4>\widehat M_i^\rho(r)$.  Hence $\normp{h_i-\mu(i)}\le r$, and the same optimal assignment witnesses $\rho'$.
\end{proof}

Because the preceding proof never changes the witness assignment, all successful local updates can be applied at once.  This simultaneous version is what the algorithm uses in each round.

\begin{lemma}[Simultaneous safe pruning]\label{lem:simultaneous-safe-pruning}
Let $\rho$ be a safe radius vector.  Let $r\ge2$.  Let $S\subseteq[n]$ be the set of indices satisfying
\[
  \frac{r^2}{4} > \widehat M_i^\rho(r).
\]
Define
\[
  \rho'_i=\begin{cases}
    r, & i\in S,\\
    \rho_i, & i\notin S.
  \end{cases}
\]
Then $\rho'$ is safe.
\end{lemma}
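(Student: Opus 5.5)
The plan is to fix a single witness $\mu$ for $\rho$ and show that it also witnesses $\rho'$. The key observation is that all estimators $\widehat M_i^\rho(r)$ for $i\in S$ are computed with respect to the \emph{old} vector $\rho$. Therefore the argument of \cref{lem:safe-pruning} applies to each $i\in S$ independently, with the same $\mu$. We never need to reason about an intermediate vector in which some radii have already been decreased.

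First, let $\mu$ be an optimal assignment with $\normp{h_j-\mu(j)}\le\rho_j$ for all $j\in[n]$. For $j\notin S$ we have $\rho'_j=\rho_j$, so the required bound is immediate. For $i\in S$ we must show $\normp{h_i-\mu(i)}\le r$. Suppose not. Then, since $\mu$ is optimal, every $z\in\GridBallp(h_i,r)$ is occupied by some $h_j$ with $j\ne i$; otherwise, moving $h_i$ to $z$ would strictly reduce the cost. For each such $z$, the occupying index $j$ satisfies $\normp{h_j-z}\le\rho_j$ by the witness property for $\rho$. Hence, by \cref{lem:estimator-overcount}, $j$ is counted by $\widehat M_i^\rho(r)$. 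Injectivity of $\mu$ gives $|\GridBallp(h_i,r)|\le\widehat M_i^\rho(r)<r^2/4$, which contradicts \cref{lem:lattice-growth} since $r\ge2$. Therefore $\normp{h_i-\mu(i)}\le r=\rho'_i$.

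Since $\mu$ is unchanged, it satisfies all the constraints of $\rho'$ simultaneously and is optimal, so $\rho'$ is safe.

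The only subtle point is whether applying several updates at once could interact. A sequential application of \cref{lem:safe-pruning} would be problematic, because after earlier updates the estimator would be evaluated with respect to a modified vector rather than $\rho$. The proof avoids this by never changing the witness and by using only the witness property for $\rho$. Equivalently, one can observe that the proof of \cref{lem:safe-pruning} shows that \emph{every} witness of $\rho$ satisfies the new bound at $i$. Applying this to one fixed witness for each $i\in S$ yields the simultaneous statement.
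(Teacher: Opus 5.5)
Your proof is correct and follows the paper's argument. You fix one witness $\mu$ for $\rho$, observe that every test for $i\in S$ is evaluated against the old vector $\rho$, and rerun the contradiction from \cref{lem:safe-pruning} for each such $i$ with this same $\mu$, leaving indices outside $S$ untouched. The only difference is that you write that contradiction step out in full, whereas the paper cites it from \cref{lem:safe-pruning}.
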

\begin{proof}
Let $\mu$ be an optimal assignment witnessing $\rho$.  The proof of \cref{lem:safe-pruning} shows that, for every $i\in S$, this same assignment must satisfy $\normp{h_i-\mu(i)}\le r$; otherwise, we would obtain a contradiction to the pruning inequality for that particular $i$.  For indices outside $S$, the old bound $\rho_i$ is unchanged.  Thus $\mu$ witnesses the safety of $\rho'$.
\end{proof}

\subsection{Computing the Estimator}

It remains to evaluate the overcount efficiently.  For a fixed radius, this is the standard offline problem of counting how many axis-aligned squares contain each query point.

\begin{lemma}[Square-covering computation]\label{lem:square-covering}
For fixed $r$ and $\rho$, all values $\widehat M_i^\rho(r)$ can be computed in $O(n\log n)$ time by counting, for each input point, how many of the squares $Q_j^\rho(r)$ cover it.
\end{lemma}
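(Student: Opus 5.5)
We compute all values $\widehat M_i^\rho(r)$ by a plane sweep that counts, for each query point, how many axis-aligned squares cover it. For each $j$ the square $Q_j^\rho(r)$ is the product $[x_j-s_j,x_j+s_j]\times[y_j-s_j,y_j+s_j]$ with half-side $s_j=r+\rho_j$, where $h_j=(x_j,y_j)$. Each square produces two events on the $x$-axis, an insertion at $x_j-s_j$ and a deletion at $x_j+s_j$. Each query point $h_i$ produces one event at $x_i$.

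\textbf{Ordering the events.} Squares are closed, so an insertion must be processed before a query at the same $x$-coordinate, and a query before a deletion at the same coordinate. We sort all $3n$ events in $O(n\log n)$ time. By the computational model, all endpoints are exact dyadic rationals of $O(1)$ words, so comparisons take constant time.

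\textbf{The sweep structure.} During the sweep we maintain the multiset of $y$-intervals $[y_j-s_j,y_j+s_j]$ of the currently active squares. A query at $h_i$ must return the number of active intervals containing $y_i$. We use a Fenwick tree over the $2n$ compressed interval endpoints, sorted with lower endpoints before upper endpoints on ties. Inserting an interval adds $+1$ at the rank of its lower endpoint and $-1$ just after the rank of its upper endpoint. A query at $y_i$ computes the prefix sum up to the position of $y_i$ in this order, which we find by binary search. That prefix sum equals the number of active intervals whose lower endpoint is at most $y_i$ and whose upper endpoint is at least $y_i$, which is exactly the stabbing count. Each operation costs $O(\log n)$, so the whole sweep costs $O(n\log n)$.

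\textbf{Excluding $j=i$ and equality cases.} Square $Q_i^\rho(r)$ always contains its own center $h_i$, so it is counted once at the query for $h_i$. We subtract $1$ from every query result. The only delicate point is treating boundary equalities consistently, so that a point on the boundary of a closed square is counted. Handling this requires only the tie-breaking conventions above; no other step is nontrivial.
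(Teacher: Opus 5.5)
Your overall plan coincides with the paper's. You run an $x$-sweep with insertions before queries before deletions at equal $x$, keep the active $y$-intervals in a Fenwick tree over compressed $y$-coordinates, and finally subtract $1$ for $Q_i^\rho(r)$ itself. The gap is in your concrete Fenwick encoding: your claim that the prefix sum equals the stabbing count is false. You compress only the $2n$ interval endpoints, so $y_i$ has no slot of its own. With $+1$ at $\operatorname{rank}(a)$ and $-1$ at $\operatorname{rank}(b)+1$, the prefix sum up to position $k$ counts exactly the active intervals with $\operatorname{rank}(a)\le k\le\operatorname{rank}(b)$.

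Take three squares of half-side $6$ centered at $(0,0)$, $(0,-10)$, $(0,10)$ (e.g.\ $r=2$, $\rho\equiv4$), all active at $x=0$. Their $y$-intervals $[-6,6]$, $[-16,-4]$, $[4,16]$ occupy rank ranges $[2,5]$, $[1,3]$, $[4,6]$. The query value $y=0$ falls strictly between ranks $3$ and $4$. The prefix sum up to $3$ also counts $[-16,-4]$, and the prefix sum up to $4$ also counts $[4,16]$. So whichever rank you call ``the position of $y_i$'', you get $2$ instead of $1$, and the estimator for the center at the origin comes out as $1$ instead of $0$. In general, no prefix is correct when the endpoint just before $y_i$ in your tie-broken order is an upper endpoint and the one just after it is a lower endpoint.

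The repair is a one-line change, but it sits exactly in the boundary bookkeeping you described as routine. One option is to place the $-1$ at $\operatorname{rank}(b)$ itself and query the prefix up to $k_i=|\{\text{lower endpoints}\le y_i\}|+|\{\text{upper endpoints}<y_i\}|$, which is the insertion index of $y_i$ under your tie rule. An interval is then counted iff $\operatorname{rank}(a)\le k_i<\operatorname{rank}(b)$, i.e.\ iff $a\le y_i\le b$. The other option is to compress the query values together with the endpoints; the paper compresses ``all relevant $y$-coordinates''. Order ties as lower endpoints, then queries, then upper endpoints, and read the prefix sum at $\operatorname{rank}(y_i)$; your $+1$/$-1$ placement is then correct. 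With either fix, the rest of your argument and the $O(n\log n)$ bound go through unchanged.
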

\begin{proof}
Create two vertical sweep events for each square $Q_j^\rho(r)$, one on each of its two vertical sides.  Sort these events together with the query points $h_i$ by $x$-coordinate.  At a common $x$-coordinate, process left-side insertions first, then queries, and right-side deletions last, so that boundary containment is counted inclusively. During the sweep, maintain the $y$-intervals of the active squares in a Fenwick tree after coordinate-compressing all relevant $y$-coordinates.  When the sweep reaches $h_i$, the data structure returns the number of active squares whose $y$-interval contains the $y$-coordinate of $h_i$.  Finally, subtract the contribution of $Q_i^\rho(r)$, which always covers $h_i$.  Sorting and Fenwick-tree operations take $O(n\log n)$ time.
\end{proof}

\section{A Warm-Up for \texorpdfstring{$\ell_1$}{L1}: \texorpdfstring{$O(n\sqrt n)$}{O(n sqrt n)} Candidates}\label{sec:warmup}

As a warm-up, we show that in the $\ell_1$ norm, a single global reduction already turns the quadratic candidate set into an $O(n\sqrt n)$ candidate set.  The $\ell_1$ case keeps the geometry transparent while exposing the two ingredients used later for general norms: low-radius balls can be handled directly at low cost, and high-radius balls can occur only in dense parts of the input.
\begin{definition}[One-round pruning radii]\label{def:one-round-radius}
Set $R_{-1}=2^{\lceil \log_2 (3\sqrt n)\rceil}$.  Starting from
$\rho_i^{(-1)}=R_{-1}$, define $R_0=2^{\lceil\log_2(3n^{1/4})\rceil}$ and apply
\cref{lem:simultaneous-safe-pruning} with test radius $R_0$ under the
$\ell_1$ norm.  The resulting vector is denoted by $\rho^{(0)}$.
\end{definition}

The candidate set consists of the grid points in the final safe $\ell_1$ balls.  Points whose radius was reduced to $R_0$ contribute only $O(R_0^2)=O(\sqrt{n})$ grid points each.  The following theorem shows that the union of the balls centered at the remaining points, whose radii stay large, has small total area.

\begin{theorem}[Warm-up candidate bound]\label{thm:warmup}
The candidate set
\[
  \calC_0=\bigcup_{i=1}^n \GridBallOne(h_i,\rho_i^{(0)})
\]
is a candidate set for the $\ell_1$ norm and satisfies
\[
  |\calC_0|=O(n\sqrt n).
\]
It can be constructed in $O(n\sqrt n\log n)$ time.
\end{theorem}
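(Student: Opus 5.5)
Safety and the construction time are the easy parts; the size bound needs a packing-and-charging argument.

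\textbf{Safety.} By \cref{lem:global-safe-radius}, the vector $\rho^{(-1)}$ is safe under $\ell_1$, since $R_{-1}\ge 3\sqrt n$. By \cref{lem:simultaneous-safe-pruning} with $r=R_0\ge2$, the vector $\rho^{(0)}$ is safe. Hence some optimal witness $\mu^*$ satisfies $\mu^*(i)\in\GridBallOne(h_i,\rho_i^{(0)})\subseteq\calC_0$, so $\calC_0$ is a candidate set.

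\textbf{Construction time.} Computing $\widehat M^{\rho^{(-1)}}_i(R_0)$ for all $i$ takes $O(n\log n)$ time by \cref{lem:square-covering}. Enumerating the grid points of all balls and deduplicating them by sorting takes $O(n\sqrt n\log n)$ time, since each ball has $O(n)$ points only when the ball is large, and we bound the total enumeration below. More precisely, the time is the sum of the ball sizes times $\log n$. That sum is bounded by the same charging argument as the size bound if we first deduplicate the large balls: we take the union of the large balls as a union of $O(\sqrt n)$ axis-parallel squares per cluster.

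\textbf{Size bound.} Let $S$ be the set of pruned indices. Then
\[
  \Bigl|\bigcup_{i\in S}\GridBallOne(h_i,R_0)\Bigr|\le 9nR_0^2=O(n\sqrt n)
\]
by \cref{lem:lattice-growth}. For $i\notin S$ we have $\widehat M_i(R_0)\ge R_0^2/4=\Omega(\sqrt n)$, where every $\rho_j=R_{-1}$. So at least $\Omega(\sqrt n)$ other input points lie within $\ell_\infty$ distance $R_0+R_{-1}\le 2R_{-1}$ of $h_i$. Every unpruned point is therefore dense at scale $O(\sqrt n)$.

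\textbf{Charging.} Choose greedily a maximal subset $T\subseteq[n]\setminus S$ whose points are pairwise more than $4R_{-1}$ apart in $\ell_\infty$. The $\ell_\infty$-squares of radius $2R_{-1}$ around the points of $T$ are disjoint, and each contains $\Omega(\sqrt n)$ input points. Hence $|T|=O(n/\sqrt n)=O(\sqrt n)$. By maximality, every unpruned $h_i$ is within $4R_{-1}$ of some $h_t$ with $t\in T$. Its ball $\GridBallOne(h_i,R_{-1})$ then lies in the square $h_t+[-5R_{-1},5R_{-1}]^2$, which contains $O(R_{-1}^2)=O(n)$ grid points. The union of the large balls thus has $O(|T|\cdot n)=O(n\sqrt n)$ points.

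For the running time, the algorithm computes $T$ greedily. It then enumerates the grid points of these $O(\sqrt n)$ squares, $O(n\sqrt n)$ points in total, and keeps only those within $\ell_1$ distance $R_{-1}$ of some unpruned point. This membership test can be done per square by a sweep or by rasterization in $O(n\sqrt n\log n)$ total time. Finally, it merges in the small balls and removes duplicates by sorting.

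The main obstacle is the charging step: turning the counting condition $\widehat M_i\ge R_0^2/4$ into a disjoint-area bound. The delicate point is choosing the separation $4R_{-1}$ so that the counting squares are disjoint while the covering squares still contain the large balls. A second issue is keeping the enumeration time near the output size rather than the $O(n^2)$ sum of the large ball sizes.
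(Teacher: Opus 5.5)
Your proof is correct and follows the paper's argument. Safety comes from \cref{lem:global-safe-radius} and \cref{lem:simultaneous-safe-pruning}, the pruned balls contribute $O(nR_0^2)$ points, and a maximal separated subset of the unpruned centers is counted by charging its $\Omega(\sqrt n)$ supporters, with each cluster square holding $O(n)$ grid points. The differences are minor: you take separation $4R_{-1}$ so that the counting squares are disjoint, whereas the paper uses $R_{-1}$-separation together with \cref{lem:linf-packing}. You also spell out the enumeration that the paper only calls output-sensitive. Your opening remark that enumerating all balls directly costs $O(n\sqrt n\log n)$ is wrong as stated, since the large balls can total $\Theta(n^2)$ points, but your later cluster-and-filter procedure avoids this.
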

\begin{proof}
By \cref{lem:simultaneous-safe-pruning}, the radius vector $\rho^{(0)}$ is safe, meaning that $\calC_0$ contains the grid points used by some optimal assignment.

After the pruning round, $\rho_i^{(0)}\in \{R_{-1}, R_0\}$. Let $H_0=\{i:\rho_i^{(0)}=R_{-1}\}$.
We split the candidate region into the low-radius part and the high-radius part.  If $i\notin H_0$, then $\rho_i^{(0)}= R_0$, and the total number of grid points contributed by these balls is at most
\[
  \sum_{i\notin H_0} O((R_0+1)^2)=O(nR_0^2+n)=O(n\sqrt n).
\]
For $i\in H_0$, we will see that the contribution is also $O(n\sqrt n)$ by \cref{lem:warmup-packing}.  This proves the size bound.  The construction time follows from one call to \cref{lem:square-covering} and output-sensitive enumeration of the resulting grid points.
\end{proof}

It remains to bound the number of grid points in the high-radius region.  An index whose radius remains $R_{-1}$ must fail the pruning test at radius $R_0$ and therefore has many potential supporters within distance $O(\sqrt n)$.  A maximal separated subfamily lets us bound the number of these dense regions.

\begin{lemma}[Warm-up high-radius packing]\label{lem:warmup-packing}
Let $H_0=\{i:\rho_i^{(0)}>R_0\}$.  Then
\[
  \left|\left(\bigcup_{i\in H_0}\BallOne(h_i,\rho_i^{(0)})\right)\cap \Z^2\right|=O(n\sqrt n).
\]
\end{lemma}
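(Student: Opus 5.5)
Every index in $H_0$ has radius $R_{-1}=O(\sqrt n)$, so each of its balls contains $O(n)$ grid points by \cref{lem:lattice-growth}. It therefore suffices to show that the union of these balls can be covered by $O(\sqrt n)$ axis-aligned squares of side $O(\sqrt n)$. I would obtain such a cover by extracting a maximal $\ell_\infty$-separated subfamily of $H_0$ and charging each chosen center to many distinct input points.

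\textbf{Step 1: density of each $i\in H_0$.} Since $i$ failed the pruning test at radius $R_0\ge 2$, we have $\widehat M_i^{\rho^{(-1)}}(R_0)\ge R_0^2/4\ge \tfrac94\sqrt n$. All radii in $\rho^{(-1)}$ equal $R_{-1}$, so every counted index $j$ satisfies $\normI{h_i-h_j}\le R_0+R_{-1}\le 2R_{-1}$. Hence the $\ell_\infty$-ball $h_i+[-2R_{-1},2R_{-1}]^2$ contains at least $\tfrac94\sqrt n$ input indices other than $i$.

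\textbf{Step 2: separated subfamily.} Greedily choose a maximal set $I\subseteq H_0$ whose centers are pairwise more than $s=4R_{-1}$ apart in $\ell_\infty$. The $\ell_\infty$-balls of radius $2R_{-1}$ around the centers in $I$ are then pairwise disjoint. Each contains at least $\tfrac94\sqrt n$ input indices, and the indices of $\{h_1,\dots,h_n\}$ are distinct objects, so repeated locations cause no trouble. It follows that $|I|\cdot \tfrac94\sqrt n\le n$, so $|I|=O(\sqrt n)$. By maximality, every $h_k$ with $k\in H_0$ lies within $\ell_\infty$ distance $4R_{-1}$ of some center in $I$.

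\textbf{Step 3: covering.} Using $\BallOne(h_k,R_{-1})\subseteq h_k+[-R_{-1},R_{-1}]^2$, every high-radius ball is contained in $h_{i'}+[-5R_{-1},5R_{-1}]^2$ for some $i'\in I$. Each such square contains $O(R_{-1}^2)=O(n)$ grid points, so the union contains at most $|I|\cdot O(n)=O(n\sqrt n)$ grid points.

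The main point needing care is Step 1, which must use the right radius vector. The estimator is evaluated with $\rho^{(-1)}$, the uniform radius $R_{-1}$, not with $\rho^{(0)}$. This is what makes all supporters lie within the fixed distance $R_0+R_{-1}$, which in turn makes the disjointness in Step 2 work with a constant-factor separation. Once that is in place, \cref{lem:linf-packing} is not even needed, since a direct disjointness count suffices.
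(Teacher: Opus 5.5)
Your proof is correct and follows the same route as the paper: you take a maximal $\ell_\infty$-separated subfamily of $H_0$, cover the high-radius balls by enlarged squares of radius $O(\sqrt n)$ around its centers, and bound the number of centers by charging each one to its $\Omega(\sqrt n)$ potential supporters computed under $\rho^{(-1)}$. The only difference is that you separate at scale $4R_{-1}$ rather than $R_{-1}$; this makes the supporter neighborhoods pairwise disjoint, so a direct count replaces the $O(1)$-overlap bound the paper gets from \cref{lem:linf-packing}.
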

\begin{proof}
Every $i\in H_0$ failed the pruning test at radius $R_0$, so
\[
  \widehat M_i^{\rho^{(-1)}}(R_0)\ge R_0^2/4=\Omega(\sqrt n).
\]
Because $\rho_j^{(-1)}=R_{-1}=2^{\lceil \log_2 (3\sqrt n)\rceil}$ for all $j$, every point counted by this estimator lies at $\ell_\infty$ distance $O(\sqrt n)$ from $h_i$.

Choose a maximal subset $J\subseteq H_0$ whose points $\{h_i:i\in J\}$ are pairwise more than $R_{-1}$ apart in $\ell_\infty$.  By maximality, every center $h_i$ with $i\in H_0$ is at $\ell_\infty$-distance at most $R_{-1}$ from some center $h_j$ with $j\in J$.  Hence the high-radius region $\bigcup_{i\in H_0}\BallOne(h_i,\rho_i^{(0)})$ is covered by $\ell_\infty$-squares of radius $O(\sqrt n)$ centered at the points indexed by $J$.  These squares contain $O(|J|n)$ grid points in total.

It remains to bound $|J|$.  Each $i\in J$ has $\Omega(\sqrt n)$ input points at $\ell_\infty$-distance $O(\sqrt n)$ from $h_i$.  Conversely, by \cref{lem:linf-packing}, any fixed input point lies in only $O(1)$ of these neighborhoods because the centers in $J$ are $R_{-1}$-separated.  Counting center--input incidences gives $|J|\Omega(\sqrt n)\le O(n)$, so $|J|=O(\sqrt n)$.  This gives the required bound.
\end{proof}

\section{Optimal-Size Candidates by Iterated Pruning}\label{sec:iterated-pruning}

To obtain a linear-size candidate set, we repeat the same safe pruning rule with a fixed geometric shrink factor until the test radius becomes constant, as shown in \cref{alg:iterated-pruning}. The algorithm is independent of $p$, and all constants are uniform over $p\in[1,\infty]$. The resulting radius vector is the same for all $p$. For the analysis
of the high-radius regions (\cref{sec:high-radius-region}, \cref{sec:high-radius-charging}), we will fix an arbitrary $p\in[1,\infty]$.

\subsection{The Test Radii Sequence}
\label{sec:radii-seq}
Let
\[
  R_{-1}=2^{\lceil \log_2 (3\sqrt n)\rceil}
\]
and for $t\ge 0$ let
\[
  R_t=R_{t-1}/2.
\]
Let $T\ge0$ be the largest integer with $R_T\ge2$.  Since $R_{-1}$ is
a power of two and $R_{-1}\ge4$, this index always exists,
$T=O(\log n)$, and $R_T=2$.
Since $R_{-1} \ge 3\sqrt{n}$, the vector $\rho^{(-1)}$ is safe for every $p\in[1,\infty]$ by \cref{lem:global-safe-radius}.
The algorithm starts from $\rho_i^{(-1)}=R_{-1}$ for all $i$.  For each $t=0,1,\ldots,T$, it computes all values
\[
  \widehat M_i^{\rho^{(t-1)}}(R_t)
\]
and applies \cref{lem:simultaneous-safe-pruning} at radius $R_t$.  The resulting radius vector is $\rho^{(t)}$.

\begin{algorithm}[t]
\caption{Iterated pruning}\label{alg:iterated-pruning}
\begin{algorithmic}[1]
\Require Indexed input points $P=\{h_1,\ldots,h_n\}$
\State Set $R_{-1}\gets 2^{\lceil \log_2 (3\sqrt n)\rceil}$, and choose $T$ as in \cref{sec:radii-seq}
\For{$i=1,\ldots,n$}
  \State $\rho_i^{(-1)}\gets R_{-1}$
\EndFor
\For{$t=0,1,\ldots,T$}
  \State $R_t\gets R_{t-1}/2$
  \State Compute all values $\widehat M_i^{\rho^{(t-1)}}(R_t)$ using the square-covering estimator (\cref{lem:square-covering})
  \For{$i=1,\ldots,n$}
    \If{$R_t^2/4>\widehat M_i^{\rho^{(t-1)}}(R_t)$}
      \State $\rho_i^{(t)}\gets R_t$
    \Else
      \State $\rho_i^{(t)}\gets \rho_i^{(t-1)}$
    \EndIf
  \EndFor
\EndFor
\State \Return $\rho^{(T)}$
\end{algorithmic}
\end{algorithm}

The choice of test radii makes $R_t^2$ shrink by a factor of $4$ from one round to the next.  At the end, the low-radius points have radius exactly $R_T=2$ and contribute only $O(n)$ candidates.  The remaining points are accounted for by an amortized area bound.

\subsection{The High-Radius Region}
\label{sec:high-radius-region}
\begin{definition}[High-radius indices and region]\label{def:high-radius-set}
For $t=-1,0,1,\ldots,T$, define
\[
  H_t=\{i\in[n]:\rho_i^{(t)}>R_t\}.
\]
The high-radius region is
\[
  U_t=\bigcup_{i\in H_t}\Ballp(h_i,\rho_i^{(t)}).
\]
\end{definition}

Since a radius changes only when it is reset to the current test radius, after round $t$ every radius is one of $R_{-1},R_0,\ldots,R_t$.  Thus, for $t\ge 0$, if $i\notin H_t$ then $\rho_i^{(t)}=R_t$, while if $i\in H_t$ then $\rho_i^{(t)}$ is one of $R_{-1},R_0,\ldots,R_{t-1}$.  In particular, if $i\in H_t\setminus H_{t-1}$, then
\[
  \rho_i^{(t-1)}=\rho_i^{(t)}=R_{t-1}.
\]
Although iterated pruning may resemble a binary search for a displacement bound for each input, its outcomes are not monotone across rounds. The test in round $t$ uses the current radius vector $\rho^{(t-1)}$ which changes as other inputs are pruned. Consequently, an input may fail the test at a larger radius in one round but pass at a smaller radius in a later round. Thus, the high-radius sets $H_t$ need not be nested. The analysis does not rely on monotonicity of $H_t$; it uses only the monotonicity of the charged regions $S_t$ as defined in \cref{sec:high-radius-charging}.
For a constant $D\ge 1$, let
\[
  U_t^{(D)}=\bigcup_{i\in H_t}\Ballp(h_i,D\rho_i^{(t)}).
\]

The following packing lemma generalizes \cref{lem:warmup-packing}.

\begin{lemma}[Dense-region packing bound]\label{lem:dense-region-packing-lattice}
Let $R\ge 1$, $\eta>0$, and $D\ge 1$ be fixed.  Let $X\subseteq[n]$, and
let $S\subseteq[n]$ be a set of centers such that every $i\in S$ satisfies
\[
  \left|\{j\in X:\|h_i-h_j\|_\infty\le R\}\right|
  \ge \eta R^2.
\]
Then
\[
  \area\left(\bigcup_{i\in S}
    \bigl(h_i+[-DR,DR]^2\bigr)
  \right)
  =
  O_D\left(\frac{|X|}{\eta}\right),
\]
and also
\[
  \left|
  \left(
  \bigcup_{i\in S}
    \bigl(h_i+[-DR,DR]^2\bigr)
  \right)
  \cap\Z^2
  \right|
  =
  O_D\left(\frac{|X|}{\eta}\right).
\]
\end{lemma}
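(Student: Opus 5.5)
We follow the template of \cref{lem:warmup-packing}: replace the family of squares by a separated subfamily whose squares still cover everything, and bound the size of that subfamily by double counting.

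\textbf{Step 1 (separated subfamily).} Choose a maximal subset $J\subseteq S$ whose centers $\{h_i:i\in J\}$ are pairwise more than $R$ apart in $\ell_\infty$. By maximality, every $h_i$ with $i\in S$ lies within $\ell_\infty$-distance $R$ of some $h_j$ with $j\in J$. Therefore
\[
  \bigcup_{i\in S}\bigl(h_i+[-DR,DR]^2\bigr)\subseteq\bigcup_{j\in J}\bigl(h_j+[-(D+1)R,(D+1)R]^2\bigr).
\]
Each enlarged square has area $4(D+1)^2R^2$. Since $R\ge1$, each contains at most $(2(D+1)R+1)^2=O_D(R^2)$ grid points. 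Hence both the area and the lattice count of the union are $O_D(|J|R^2)$, and it suffices to show $|J|=O(|X|/(\eta R^2))$.

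\textbf{Step 2 (incidence counting).} Count pairs $(i,j)$ with $i\in J$, $j\in X$, and $\|h_i-h_j\|_\infty\le R$.
\begin{itemize}
  \item By hypothesis, each $i\in J$ takes part in at least $\eta R^2$ such pairs, so there are at least $|J|\,\eta R^2$ pairs in total.
  \item Fix $j\in X$. The indices $i\in J$ paired with it have centers in the $\ell_\infty$-ball of radius $R$ around $h_j$, and these centers are pairwise more than $R$ apart. By \cref{lem:linf-packing} with $s=R$, there are at most $O((1+R/R)^2)=O(1)$ of them.
\end{itemize}
One subtlety: \cref{lem:linf-packing} bounds the number of distinct locations. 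Distinct indices in $J$ have distinct locations, because coincident centers would be at distance $0\le R$ and so could not both be in $J$. The count therefore applies to indices.

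Combining the two bounds gives $|J|\,\eta R^2\le O(|X|)$, so $|J|=O(|X|/(\eta R^2))$.

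\textbf{Step 3 (conclusion).} Substituting into Step 1, the area and the number of grid points of the union are both $O_D(|J|R^2)=O_D(|X|/\eta)$.

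This is the main place where care is needed. The lattice count must not carry an additive $+1$ per square that is independent of $R$. The assumption $R\ge1$ handles this, since it makes $(2(D+1)R+1)^2=O_D(R^2)$.
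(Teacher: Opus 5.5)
Your proposal is correct and follows the paper's proof essentially step for step. Both take a maximal $R$-separated subfamily $J$ whose squares of radius $(D+1)R$ cover the union, double count incidences via \cref{lem:linf-packing} with $s=R$ to get $|J|\,\eta R^2\le O(|X|)$, and conclude from the $O_D(R^2)$ area and lattice-point bounds per square. You also make explicit two details the paper leaves implicit: distinct indices in $J$ have distinct locations, and $R\ge1$ absorbs the additive $+1$ in the lattice count.
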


\begin{proof}
Take a maximal subset $J\subseteq S$ whose centers are pairwise more than
$R$ apart in $\ell_\infty$.  By maximality, the squares centered at the points
indexed by $S$ with radius $DR$ are covered by those centered at the points
indexed by $J$ with radius
$(D+1)R$.

For each $j\in J$, there are at least $\eta R^2$ points of $X$ at
$\ell_\infty$-distance at most $R$ from $h_j$.  Conversely, because the centers in
$J$ are $R$-separated, \cref{lem:linf-packing} implies that every fixed
input point belongs to only $O(1)$ of these $R$-neighborhoods.  Therefore
\[
  |J|\eta R^2\le O(|X|).
\]
The area covered by the enlarged squares is $O_D(|J|R^2)$, which gives
the area bound.  Since each enlarged square contains $O_D(R^2)$ integer
grid points, the same inequality also gives the lattice-point-count bound.
\end{proof}

The next lemma is the local dichotomy used by the charging argument.  If a point $h_i$ cannot be pruned to radius $R_t$, then the square-covering estimator identifies many input indices near $h_i$. Either one of these potential supporters already belongs to the previous high-radius region, or all potential supporters had radius exactly $R_{t-1}$ in the previous round and lie near $h_i$.  In the latter case, we add a dense region near $h_i$ and charge its grid points to the potential supporters.

\begin{lemma}[Pruning-failure dichotomy]\label{lem:failed-pruning-dichotomy}
Set $H_{-1}=\emptyset$.  Fix $t\in\{0,\ldots,T\}$ and let $i\in H_t\setminus H_{t-1}$.  Let
\[
  T_i=\{j\ne i:h_i\in Q_j^{\rho^{(t-1)}}(R_t)\}.
\]
Then $|T_i|\ge R_{t-1}^2/16$, and one of the following holds.
\begin{enumerate}[label=(\roman*)]
  \item $T_i$ contains an index in $H_{t-1}$.
  \item $T_i\cap H_{t-1}=\emptyset$, and every $j\in T_i$ satisfies
  \[
    \normI{h_i-h_j}<2R_{t-1}.
  \]
\end{enumerate}
\end{lemma}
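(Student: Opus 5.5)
The plan is to unpack what membership in $H_t\setminus H_{t-1}$ says about round $t$, and then split on whether $T_i$ meets $H_{t-1}$.

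\textbf{Step 1: what $i\in H_t\setminus H_{t-1}$ means.} Since $i\in H_t$, we have $\rho_i^{(t)}>R_t$, so $\rho_i^{(t)}\neq R_t$. By \cref{alg:iterated-pruning}, this means $i$ failed the pruning test in round $t$:
\[
  \widehat M_i^{\rho^{(t-1)}}(R_t)\ge R_t^2/4 .
\]
By definition, $|T_i|=\widehat M_i^{\rho^{(t-1)}}(R_t)$. Using $R_t=R_{t-1}/2$, we get
\[
  |T_i|\ge R_{t-1}^2/16,
\]
which is the cardinality claim.

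\textbf{Step 2: the dichotomy.} If $T_i\cap H_{t-1}\neq\emptyset$, case (i) holds and there is nothing more to show. Otherwise every $j\in T_i$ satisfies $j\notin H_{t-1}$, so it remains to bound $\normI{h_i-h_j}$ for such $j$.

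\textbf{Step 3: radii of indices outside $H_{t-1}$.} Recall the observation made right after \cref{def:high-radius-set}: after round $t-1$, every radius lies in $\{R_{-1},\dots,R_{t-1}\}$. Hence $j\notin H_{t-1}$ forces $\rho_j^{(t-1)}=R_{t-1}$.

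The boundary case $t=0$ needs a separate check, because $H_{-1}=\emptyset$ is set by convention. There, every $\rho_j^{(-1)}=R_{-1}$, so again $\rho_j^{(t-1)}=R_{t-1}$. Consistently with this, the lemma's hypothesis $i\notin H_{-1}$ is vacuous.

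\textbf{Step 4: the distance bound.} Since $j\in T_i$, the point $h_i$ lies in $Q_j^{\rho^{(t-1)}}(R_t)$, so
\[
  \normI{h_i-h_j}\le R_t+\rho_j^{(t-1)}=R_{t-1}/2+R_{t-1}=\tfrac32 R_{t-1}<2R_{t-1},
\]
where the last step uses $R_{t-1}>0$. This gives case (ii).

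The only point requiring care is the boundary case $t=0$ in Step 3. The argument works because the definition of $H_{-1}$ via $\rho^{(-1)}\equiv R_{-1}$ agrees with the convention $H_{-1}=\emptyset$. Everything else is direct bookkeeping from the algorithm.
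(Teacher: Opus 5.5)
Your proposal is correct and follows the paper's proof essentially step for step: the failed pruning test gives $|T_i|=\widehat M_i^{\rho^{(t-1)}}(R_t)\ge R_t^2/4=R_{t-1}^2/16$, and when $T_i\cap H_{t-1}=\emptyset$ every $j\in T_i$ has $\rho_j^{(t-1)}=R_{t-1}$, so $\normI{h_i-h_j}\le R_t+R_{t-1}<2R_{t-1}$. Your separate check of $t=0$ is harmless but not needed, since the fact that all radii after round $t-1$ lie in $\{R_{-1},\ldots,R_{t-1}\}$ already covers it; as in the paper, the hypothesis $i\notin H_{t-1}$ plays no real role in deriving either conclusion.
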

\begin{proof}
Since $i\in H_t$, it was not pruned at radius $R_t$.  Because $i\in H_t\setminus H_{t-1}$, $\rho_i^{(t-1)}=\rho_i^{(t)}=R_{t-1}$.  The failed pruning test gives
\[
  \widehat M_i^{\rho^{(t-1)}}(R_t)\ge R_t^2/4
  = R_{t-1}^2/16.
\]
  This is exactly the lower bound on $|T_i|$.

If $T_i$ contains an index in $H_{t-1}$, the first outcome holds.  Otherwise, every $j\in T_i$ has $\rho_j^{(t-1)}=R_{t-1}$.  Since $h_i\in Q_j^{\rho^{(t-1)}}(R_t)$,
\[
  \normI{h_i-h_j}\le R_t+R_{t-1}<2R_{t-1},
\]
which proves the second outcome.
\end{proof}
\subsection{The High-Radius Charging Argument}
\label{sec:high-radius-charging}
We now prove the amortized bound on the final high-radius region.

Every scale used below is bounded below by the absolute constant $2$.
We fix $b=12$.
Recall that $R_t=R_{t-1}/2$.

\paragraph{Charged regions for the analysis.}
We construct charged regions $S_t$, for $t=-1,0,\ldots,T$, satisfying
\[
  U_t^{(2)}\subseteq S_t.
\]
The auxiliary notation follows a simple progression.  The set $I_t$
identifies the centers that require new dense regions, $X_t$ contains the potential
supporters charged for those regions, $V_t$ is the corresponding geometric
region, and $S_t$ accumulates all regions charged up to round $t$. All these objects are introduced only for the analysis and are
not maintained by the pruning algorithm.
\medskip
\begin{center}
\small
\begin{tabular}{@{}c p{0.73\linewidth}@{}}
\toprule
Symbol & Role in the charging argument \\
\midrule
$\widetilde S_t,\ S_t$
  & The expansion of the previously charged region and the cumulative
    charged region after round $t$, respectively. \\

$I_t$
  & The newly high-radius centers that require genuinely new dense
    regions. \\

$X_t,\ V_t$
  & The new potential supporter indices charged in round $t$ and the corresponding
    geometric region. \\

$\Delta_t$
  & The total amount by which $V_t$ is expanded in all later rounds. \\
\bottomrule
\end{tabular}
\end{center}
\medskip

We next define these objects formally and establish, first, that $S_t$
covers the high-radius region, and second, that the new region $V_t$ can be
charged to the fresh potential supporter set $X_t$.

Set $S_{-1}=\emptyset$.  For every round $t=0,1,\ldots,T$, define
\[
  \widetilde S_t=\mathsf N_{bR_t}(S_{t-1}).
\]
For every index $i$ that enters $H_t$ as the threshold decreases in round $t$, recall
\[
  T_i=\{j\ne i:h_i\in Q_j^{\rho^{(t-1)}}(R_t)\}.
\]
We define the set of indices that create genuinely new dense regions:
\[
  I_t =
  \left\{
  i\in H_t\setminus H_{t-1}:
  T_i\cap H_{t-1}=\emptyset
  \text{ and }
  \Ballp(h_i,2\rho_i^{(t)})\nsubseteq \widetilde S_t
  \right\}.
\]
Let
\[
  X_t=\bigcup_{i\in I_t}T_i
\]
be the set of input-point indices charged in round $t$, and let
\[
  V_t
  =
  \bigcup_{i\in I_t}
  \left(h_i+[-4R_{t-1},4R_{t-1}]^2\right).
\]
Finally, set
\[
  S_t=\widetilde S_t\cup V_t.
\]

\begin{lemma}[Coverage invariant]\label{lem:charged-region-covers-high-balls}
For every $t=0,1,\ldots,T$,
\[
  U_t^{(2)}\subseteq S_t.
\]
\end{lemma}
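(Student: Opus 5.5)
We argue by induction on $t$. For the base we take $t=-1$. Here $H_{-1}=\{i:\rho_i^{(-1)}>R_{-1}\}=\emptyset$ because every initial radius equals $R_{-1}$. Hence $U_{-1}^{(2)}=\emptyset=S_{-1}$. This matches the convention $H_{-1}=\emptyset$ of \cref{lem:failed-pruning-dichotomy}.

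For the step, fix $t\ge0$ and assume $U_{t-1}^{(2)}\subseteq S_{t-1}$. We fix $i\in H_t$ and show $\Ballp(h_i,2\rho_i^{(t)})\subseteq S_t$. Recall $S_{t-1}\subseteq\widetilde S_t\subseteq S_t$, since $\mathsf N_r(Y)\supseteq Y$. We split into three cases according to how $i$ entered $H_t$.

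\emph{Case 1: $i\in H_{t-1}$.} Since $\rho_i^{(t)}>R_t$, the index $i$ was not reset in round $t$. Therefore $\rho_i^{(t)}=\rho_i^{(t-1)}$. The ball is then one of the balls forming $U_{t-1}^{(2)}$, and it lies in $S_{t-1}\subseteq S_t$ by the induction hypothesis.

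\emph{Case 2: $i\in H_t\setminus H_{t-1}$ and $T_i\cap H_{t-1}=\emptyset$.} By the remark after \cref{def:high-radius-set}, $\rho_i^{(t)}=R_{t-1}$. If the ball lies in $\widetilde S_t$ we are done. Otherwise $i\in I_t$ by definition. By \cref{lem:norm-comparison},
\[
\Ballp(h_i,2R_{t-1})\subseteq h_i+[-2R_{t-1},2R_{t-1}]^2\subseteq V_t\subseteq S_t .
\]

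\emph{Case 3: $i\in H_t\setminus H_{t-1}$ and some $j\in T_i\cap H_{t-1}$ exists.} This is the only case needing geometric bookkeeping, and it is where I expect the care to go. Write $\rho_j=\rho_j^{(t-1)}$.

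First, since $j\in T_i$, we have $\normI{h_i-h_j}\le R_t+\rho_j$.

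Second, since $j\in H_{t-1}$, the induction hypothesis gives $\Ballp(h_j,2\rho_j)\subseteq S_{t-1}$. By \cref{lem:norm-comparison} ($\normp{x}\le 2\normI{x}$), this ball contains the square $h_j+[-\rho_j,\rho_j]^2$. Hence that square lies in $S_{t-1}$.

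Third, let $y$ be the point of this square closest to $h_i$ in $\ell_\infty$, obtained by clamping each coordinate. Then
\[
\normI{h_i-y}=\max\bigl(0,\normI{h_i-h_j}-\rho_j\bigr)\le R_t,
\]
so $\operatorname{dist}_\infty(h_i,S_{t-1})\le R_t$.

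Finally, take any $x\in\Ballp(h_i,2\rho_i^{(t)})=\Ballp(h_i,2R_{t-1})$. Then $\normI{x-h_i}\le 2R_{t-1}=4R_t$. Hence $\operatorname{dist}_\infty(x,S_{t-1})\le 5R_t\le bR_t$ with $b=12$. So $x\in\mathsf N_{bR_t}(S_{t-1})=\widetilde S_t\subseteq S_t$.

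The three cases exhaust $H_t$, which completes the induction. The only delicate points are two. One is the use of the factor $2$ in $U^{(2)}$, which guarantees that a full $\ell_\infty$-square of radius $\rho_j$ sits inside $S_{t-1}$. The other is checking that the slack $5R_t$ fits under $bR_t$.
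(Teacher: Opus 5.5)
Your proof is correct and follows the paper's argument: the same induction from the empty base $S_{-1}$, and the same three-way split ($i\in H_{t-1}$; $i$ newly high-radius with some supporter in $H_{t-1}$; $i$ newly high-radius with $T_i\cap H_{t-1}=\emptyset$, where the ball lands in $\widetilde S_t$ or in $V_t$). The only difference is cosmetic and occurs in the supporter case. You pass to the inscribed square $h_j+[-\rho_j,\rho_j]^2\subseteq\Ballp(h_j,2\rho_j)$ and clamp in $\ell_\infty$, which gives slack $5R_t$, whereas the paper applies the $\ell_p$ triangle inequality directly to $\Ballp(h_j,2\rho_j)$ and gets $6R_t$; both bounds fit under $bR_t=12R_t$.
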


\begin{proof}
We prove the claim by induction on $t$.  The case $t=-1$ is vacuous.  Assume
that $U_{t-1}^{(2)}\subseteq S_{t-1}$.

Fix $i\in H_t$.  If $i\in H_{t-1}$, then $i$ remains a high-radius index from the previous
round and its radius has not changed:
\[
  \rho_i^{(t)}=\rho_i^{(t-1)}.
\]
Hence
\[
  \Ballp(h_i,2\rho_i^{(t)})
  =
  \Ballp(h_i,2\rho_i^{(t-1)})
  \subseteq
  U_{t-1}^{(2)}
  \subseteq
  S_{t-1}
  \subseteq
  \widetilde S_t
  \subseteq
  S_t.
\]

It remains to consider $i\in H_t\setminus H_{t-1}$.  Then
\[
  \rho_i^{(t)}=\rho_i^{(t-1)}=R_{t-1}.
\]

First, suppose that $T_i\cap H_{t-1}\ne\emptyset$.  Choose
$j\in T_i\cap H_{t-1}$.  Since $j\in T_i$,
\[
  \|h_i-h_j\|_\infty\le R_t+\rho_j^{(t-1)}.
\]
By norm comparison,
\[
  \|h_i-h_j\|_p
  \le
  2(R_t+\rho_j^{(t-1)}).
\]
For any
\[
  x\in \Ballp(h_i,2\rho_i^{(t)})
  =
  \Ballp(h_i,2R_{t-1}),
\]
we therefore have
\[
  \|x-h_j\|_p
  \le
  2R_{t-1}
  +
  2R_t
  +
  2\rho_j^{(t-1)}.
\]
Since
\[
  \Ballp(h_j,2\rho_j^{(t-1)})
  \subseteq
  U_{t-1}^{(2)}
  \subseteq
  S_{t-1},
\]
the $\ell_p$-distance from $x$ to $S_{t-1}$ is at most
\[
  2R_{t-1}+2R_t
  =
  6R_t
  \le
  bR_t.
\]
The $\ell_\infty$-distance is no larger than the $\ell_p$-distance.  Thus
\[
  x\in \mathsf N_{bR_t}(S_{t-1})=\widetilde S_t.
\]
So the entire ball
\[
  \Ballp(h_i,2\rho_i^{(t)})
\]
is contained in $\widetilde S_t$.

Now suppose that $T_i\cap H_{t-1}=\emptyset$.  If
\[
  \Ballp(h_i,2\rho_i^{(t)})\subseteq \widetilde S_t,
\]
then there is nothing to prove.  Otherwise, $i$ is in $I_t$ by definition.
Since $\rho_i^{(t)}=R_{t-1}$,
\[
  \Ballp(h_i,2\rho_i^{(t)})
  \subseteq
  h_i+[-2R_{t-1},2R_{t-1}]^2
  \subseteq
  h_i+[-4R_{t-1},4R_{t-1}]^2
  \subseteq
  V_t
  \subseteq
  S_t.
\]
This proves the induction step.
\end{proof}
Next, we charge the area of the new region $V_t$ in round $t$ to the input points in the dense regions.  The region $V_t$ is expanded by $bR_{t'}$ in every subsequent round $t'$, but because $R_{t'}$ decreases geometrically with $t'$, these expansions do not change the asymptotic area bound.
\begin{lemma}[Cost of round $t$]\label{lem:one-round-charged-region-cost}
For every $t=0,1,\ldots,T$, define
\[
  \Delta_t=\sum_{s=t+1}^T bR_s,
\]
with $\Delta_T=0$.  Then
\[
  \area\bigl(\mathsf N_{\Delta_t}(V_t)\bigr)
  =
  O(|X_t|)
\]
and
\[
  \left|
  \mathsf N_{\Delta_t}(V_t)\cap\Z^2
  \right|
  =
  O(|X_t|).
\]
\end{lemma}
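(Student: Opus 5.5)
The plan is to reduce the claim to \cref{lem:dense-region-packing-lattice}. We apply it with scale $R=2R_{t-1}$, supporter set $X=X_t$, and center set $S=I_t$.

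\textbf{Step 1: the density hypothesis.} Fix $i\in I_t$. By definition $i\in H_t\setminus H_{t-1}$ and $T_i\cap H_{t-1}=\emptyset$. Outcome (ii) of \cref{lem:failed-pruning-dichotomy} then gives, for every $j\in T_i$,
\[
\normI{h_i-h_j}<2R_{t-1}.
\]
The same lemma gives
\[
|T_i|\ge R_{t-1}^2/16=(2R_{t-1})^2/64.
\]
Since $T_i\subseteq X_t$, the point $h_i$ has at least $\eta R^2$ points of $X_t$ within $\ell_\infty$-distance $R$, with $\eta=1/64$.

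\textbf{Step 2: absorbing the later expansions into a fixed dilation.} Because the radii decrease geometrically,
\[
\Delta_t=\sum_{s=t+1}^T bR_s\le bR_t\sum_{k\ge1}2^{-k}= bR_t=6R_{t-1}.
\]
Each square $h_i+[-4R_{t-1},4R_{t-1}]^2$ composing $V_t$ has $\ell_\infty$-neighborhood of radius $\Delta_t$ contained in
\[
h_i+[-10R_{t-1},10R_{t-1}]^2=h_i+[-DR,DR]^2 \quad\text{with } D=5.
\]
The neighborhood operator $\mathsf N_{\Delta_t}$ commutes with unions, and $\ell_\infty$-neighborhoods of squares are squares. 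Hence
\[
\mathsf N_{\Delta_t}(V_t)\subseteq\bigcup_{i\in I_t}\bigl(h_i+[-DR,DR]^2\bigr).
\]

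\textbf{Step 3: conclusion.} We need $R=2R_{t-1}\ge1$. This holds because $R_{t-1}\ge R_T=2$ for all $t\le T$, including $t=0$, where $R_{-1}\ge4$. Applying \cref{lem:dense-region-packing-lattice} with $D=5$ and $\eta=1/64$, both the area and the lattice-point count of the union are $O(|X_t|)$. Monotonicity of area and of counting under inclusion then gives both claims. The constants are absolute, so the bound is uniform in $t$ and $p$.

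The only delicate point is Step 2: the bound on $\Delta_t$ must be proportional to $R_{t-1}$ with an absolute constant, independent of how many rounds remain. Otherwise the dilation factor $D$ would not be a constant and the $O_D$ in the packing lemma would blow up. The geometric sum bound settles this. Everything else is a direct citation of the two earlier lemmas.
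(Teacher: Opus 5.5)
Your proposal is correct and follows essentially the same route as the paper. You verify the density hypothesis of \cref{lem:dense-region-packing-lattice} with $R=2R_{t-1}$, $\eta=1/64$, $S=I_t$, $X=X_t$ via outcome (ii) of \cref{lem:failed-pruning-dichotomy}, and you absorb the later expansions through $\Delta_t\le bR_t=6R_{t-1}$ into squares of half-side $5R$. The only difference is that you skip the paper's intermediate application with $D=2$ to $V_t$ itself, which is harmless since $V_t\subseteq\mathsf N_{\Delta_t}(V_t)$ already makes it redundant.
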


\begin{proof}
Fix a round $t$.  If $I_t=\emptyset$, the claim is trivial.  Otherwise,
take any $i\in I_t$.  By definition of $I_t$, we have
\[
  T_i\cap H_{t-1}=\emptyset.
\]
Therefore the second outcome of \cref{lem:failed-pruning-dichotomy} holds:
every $j\in T_i$ satisfies
\[
  \|h_i-h_j\|_\infty<2R_{t-1}.
\]
Moreover, because $i\in H_t\setminus H_{t-1}$ failed the pruning test at
radius $R_t$,
\[
  |T_i|
  \ge
  R_t^2/4
  =
  R_{t-1}^2/16.
\]

Set
\[
  R=2R_{t-1}
  \qquad\text{and}\qquad
  \eta=\frac{1}{64}.
\]
Then
\[
  |T_i|
  \ge
  \eta R^2,
\]
and all points of $T_i$ lie at $\ell_\infty$-distance at most $R$ from $h_i$.
Since $T_i\subseteq X_t$, the hypotheses of
\cref{lem:dense-region-packing-lattice} hold with $S=I_t$ and $X=X_t$.

The set $V_t$ is exactly
\[
  V_t
  =
  \bigcup_{i\in I_t}
  \left(h_i+[-2R,2R]^2\right),
\]
because $R=2R_{t-1}$.  Hence
\cref{lem:dense-region-packing-lattice}, applied with $D=2$, gives
\[
  \area(V_t)=O(|X_t|)
\]
and
\[
  |V_t\cap\Z^2|=O(|X_t|).
\]

It remains to account for the subsequent expansions of $V_t$.  Since
$R_s=R_{s-1}/2$,
\[
  \Delta_t
  =
  \sum_{s=t+1}^T bR_s
  \le
  \sum_{s=t+1}^\infty bR_s
  =
  bR_t
  =
  \frac{b}{2}R_{t-1}
  =
  \frac{b}{4}R.
\]
Therefore $\mathsf N_{\Delta_t}(V_t)$ is contained in the union of squares
with the same centers and half-side length
\[
  \left(2+\frac b4\right)R=5R.
\]
Applying \cref{lem:dense-region-packing-lattice} again, with
$D=5$, gives
\[
  \area\bigl(\mathsf N_{\Delta_t}(V_t)\bigr)
  =
  O(|X_t|)
\]
and
\[
  \left|
  \mathsf N_{\Delta_t}(V_t)\cap\Z^2
  \right|
  =
  O(|X_t|).
\]
\end{proof}
The next lemma shows that only input points in $S_t\setminus S_{t-1}$ can be charged in round $t$.  Its corollary, \cref{cor:charged-sets-disjoint}, shows that no input point is charged twice.
\begin{lemma}[Freshness of charged potential supporters]\label{lem:charged-supporters-are-fresh}
For every round $t$ and every index $j\in X_t$,
\[
  h_j\in S_t\setminus S_{t-1}.
\]
\end{lemma}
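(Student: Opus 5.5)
The plan is a direct two-part verification. Fix a round $t$ and an index $j\in X_t$. By definition of $X_t$ there is some $i\in I_t$ with $j\in T_i$. First we show $h_j\in S_t$. Then we show $h_j\notin S_{t-1}$ by contradiction with the defining condition $\Ballp(h_i,2\rho_i^{(t)})\nsubseteq\widetilde S_t$ of $I_t$.

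For membership in $S_t$: since $i\in I_t$, we have $i\in H_t\setminus H_{t-1}$ and $T_i\cap H_{t-1}=\emptyset$. So the second outcome of \cref{lem:failed-pruning-dichotomy} applies and gives $\normI{h_i-h_j}<2R_{t-1}$. Hence
\[
  h_j\in h_i+[-4R_{t-1},4R_{t-1}]^2\subseteq V_t\subseteq S_t .
\]

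For non-membership in $S_{t-1}$: if $t=0$ then $S_{-1}=\emptyset$ and there is nothing to prove, so assume $t\ge1$ and suppose $h_j\in S_{t-1}$. Recall that $\rho_i^{(t)}=R_{t-1}$ because $i\in H_t\setminus H_{t-1}$. Take any $x\in\Ballp(h_i,2\rho_i^{(t)})=\Ballp(h_i,2R_{t-1})$. By \cref{lem:norm-comparison}, $\normI{x-h_i}\le\normp{x-h_i}\le 2R_{t-1}$. Combining this with the bound on $\normI{h_i-h_j}$ gives
\[
  \operatorname{dist}_\infty(x,S_{t-1})\le\normI{x-h_j}<4R_{t-1}=8R_t\le bR_t ,
\]
using $b=12$. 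Thus every such $x$ lies in $\mathsf N_{bR_t}(S_{t-1})=\widetilde S_t$, i.e.\ $\Ballp(h_i,2\rho_i^{(t)})\subseteq\widetilde S_t$. This contradicts $i\in I_t$, so $h_j\notin S_{t-1}$.

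There is no real obstacle here. The only point needing care is the constant bookkeeping: the expansion radius $bR_t$ must absorb both the ball radius $2R_{t-1}$ and the supporter distance $2R_{t-1}$. Since $4R_{t-1}=8R_t\le 12R_t$, the choice $b=12$ leaves slack. One should also note that $\mathsf N_r$ is defined through $\ell_\infty$ distance, which is why the $\ell_p$ ball is first converted to an $\ell_\infty$ bound via \cref{lem:norm-comparison}.
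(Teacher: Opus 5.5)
Your proof is correct and follows the paper's argument essentially verbatim. You get $h_j\in S_t$ from the second outcome of \cref{lem:failed-pruning-dichotomy}, which places $h_j$ in the square defining $V_t$, and you get $h_j\notin S_{t-1}$ by showing $\Ballp(h_i,2R_{t-1})\subseteq\widetilde S_t$ via $4R_{t-1}=8R_t\le bR_t$, which contradicts $i\in I_t$.
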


\begin{proof}
Fix $j\in X_t$.  Then there exists $i\in I_t$ such that $j\in T_i$.
Since $i\in I_t$, we are in the second case of
\cref{lem:failed-pruning-dichotomy}.  Therefore
\[
  \|h_i-h_j\|_\infty<2R_{t-1}.
\]
Thus the point $h_j$ lies in the square centered at $h_i$ in the
definition of $V_t$:
\[
  h_j\in
  h_i+[-4R_{t-1},4R_{t-1}]^2
  \subseteq
  V_t
  \subseteq
  S_t.
\]

We now prove that $h_j\notin S_{t-1}$.  Suppose, for contradiction, that
$h_j\in S_{t-1}$.  For every
\[
  x\in \Ballp(h_i,2\rho_i^{(t)})
  =
  \Ballp(h_i,2R_{t-1}),
\]
we have
\[
  \|x-h_i\|_\infty
  \le
  \|x-h_i\|_p
  \le
  2R_{t-1}.
\]
Together with $\|h_i-h_j\|_\infty<2R_{t-1}$, this gives
\[
  \operatorname{dist}_\infty(x,S_{t-1})
  \le
  \|x-h_j\|_\infty
  \le
  4R_{t-1}
  =8R_t
  \le bR_t.
\]
Thus every such $x$ belongs to
\[
  \mathsf N_{bR_t}(S_{t-1})=\widetilde S_t.
\]
Hence
\[
  \Ballp(h_i,2\rho_i^{(t)})
  \subseteq
  \widetilde S_t,
\]
contradicting the definition of $i\in I_t$.  Therefore
$h_j\notin S_{t-1}$, and the claim follows.
\end{proof}

\begin{corollary}[Disjointness of charged sets]\label{cor:charged-sets-disjoint}
The sets $X_0,X_1,\ldots,X_T$ are pairwise disjoint.  Consequently,
\[
  \sum_{t=0}^T |X_t|\le n.
\]
\end{corollary}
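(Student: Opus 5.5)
The disjointness will come from \cref{lem:charged-supporters-are-fresh} together with the monotonicity of the charged regions. First I will check that the sets $S_t$ are nested:
\[
  S_{-1}\subseteq S_0\subseteq\cdots\subseteq S_T .
\]
This holds because, for $t\ge0$, the $\ell_\infty$-neighborhood of a set contains the set itself, so
\[
  S_{t-1}\subseteq \mathsf N_{bR_t}(S_{t-1})=\widetilde S_t\subseteq \widetilde S_t\cup V_t=S_t .
\]
For $t=0$ the inclusion is trivial, since $S_{-1}=\emptyset$.

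Now fix rounds $t<t'$ and an index $j\in X_t\cap X_{t'}$. By \cref{lem:charged-supporters-are-fresh} applied to round $t$, we get $h_j\in S_t$. Since $t\le t'-1$, nestedness gives $S_t\subseteq S_{t'-1}$, so $h_j\in S_{t'-1}$. On the other hand, \cref{lem:charged-supporters-are-fresh} applied to round $t'$ gives $h_j\notin S_{t'-1}$. This is a contradiction, so the sets $X_0,\ldots,X_T$ are pairwise disjoint.

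Every $X_t$ is a subset of $[n]$, since each $T_i$ consists of indices in $[n]$. Disjointness therefore yields
\[
  \sum_{t=0}^T|X_t|=\Bigl|\bigcup_{t=0}^T X_t\Bigr|\le n .
\]

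The argument works with indices $j$ rather than locations $h_j$, so repeated locations cause no trouble: freshness is a statement about the point $h_j$ of each index. There is no real obstacle here; the only thing to check carefully is the nestedness of the $S_t$, which follows directly from the definitions.
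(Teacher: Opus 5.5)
Your proposal is correct and follows essentially the same argument as the paper: nestedness of the charged regions $S_t$ combined with \cref{lem:charged-supporters-are-fresh} applied at rounds $t$ and $t'$ gives disjointness, and the sum bound follows since each $X_t\subseteq[n]$. The only difference is that you explicitly verify $S_{t-1}\subseteq\mathsf N_{bR_t}(S_{t-1})\subseteq S_t$, which the paper asserts without spelling out.
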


\begin{proof}
The regions $S_t$ are monotone:
\[
  S_{-1}\subseteq S_0\subseteq S_1\subseteq\cdots\subseteq S_T.
\]
By \cref{lem:charged-supporters-are-fresh}, if $j\in X_t$, then
$h_j\in S_t$.  For any later round $s>t$, this implies
$h_j\in S_{s-1}$.  But again by
\cref{lem:charged-supporters-are-fresh}, every index in $X_s$ must have
its point outside $S_{s-1}$.  Hence $j\notin X_s$.  Therefore the sets
$X_t$ are pairwise disjoint, and their total size is at most $n$.
\end{proof}
We are now ready to bound the area of the high-radius region $U_T^{(2)}$.
\begin{lemma}[High-radius area bound]\label{lem:high-radius-area}
The final high-radius region satisfies
\[
  \area\left(U_T^{(2)}\right)=O(n)
\]
and
\[
  \left|U_T^{(2)}\cap\Z^2\right|=O(n).
\]
\end{lemma}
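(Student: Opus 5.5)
By \cref{lem:charged-region-covers-high-balls}, $U_T^{(2)}\subseteq S_T$, so it suffices to bound the area and the lattice-point count of $S_T$. The plan is to unroll the recursive definition of $S_T$ into a union of neighborhoods of the regions $V_t$, and then bound each piece with \cref{lem:one-round-charged-region-cost}.

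\textbf{Unrolling.} Neighborhoods compose: $\mathsf N_a(\mathsf N_b(Y))\subseteq \mathsf N_{a+b}(Y)$ by the triangle inequality for $\operatorname{dist}_\infty$. They also commute with unions: $\mathsf N_a(Y\cup Z)=\mathsf N_a(Y)\cup\mathsf N_a(Z)$. I will show by induction on $t$ that
\[
  S_t\subseteq \bigcup_{s=0}^{t}\mathsf N_{\sum_{r=s+1}^{t} bR_r}(V_s).
\]
The base case is $S_{-1}=\emptyset$. For the induction step, use $S_t=\mathsf N_{bR_t}(S_{t-1})\cup V_t$, apply the induction hypothesis to $S_{t-1}$, and push $\mathsf N_{bR_t}$ through the union, adding $bR_t$ to each accumulated radius. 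The piece $V_t$ itself enters with radius $0$. Since $\mathsf N_a(Y)\subseteq\mathsf N_{a'}(Y)$ whenever $a\le a'$, taking $t=T$ gives
\[
  S_T\subseteq\bigcup_{s=0}^T \mathsf N_{\Delta_s}(V_s).
\]

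\textbf{Summation.} Area and lattice-point counts are subadditive over unions. By \cref{lem:one-round-charged-region-cost},
\[
  \area(S_T)\le\sum_{s=0}^T O(|X_s|)
  \qquad\text{and}\qquad
  |S_T\cap\Z^2|\le\sum_{s=0}^T O(|X_s|).
\]
The constant in that lemma is absolute: it depends only on $D=5$, $\eta=1/64$ and the packing constant of \cref{lem:linf-packing}, not on $t$. This matters because there are $T=O(\log n)$ rounds. By \cref{cor:charged-sets-disjoint}, $\sum_s|X_s|\le n$, so both quantities are $O(n)$. The lattice-point bound for $U_T^{(2)}$ then follows from $U_T^{(2)}\cap\Z^2\subseteq S_T\cap\Z^2$.

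\textbf{Main obstacle.} The only delicate step is the unrolling. One must check that the repeated expansions $\mathsf N_{bR_t}$ applied to each $V_s$ accumulate to at most $\Delta_s$, that is, that the radii add along the recursion and do not compound multiplicatively. This holds because the neighborhoods are additive metric neighborhoods, not scalings of the regions. Once this containment is in place, the rest is bookkeeping.
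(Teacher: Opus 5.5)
Your proposal is correct and follows the paper's proof essentially step for step. You obtain $U_T^{(2)}\subseteq S_T$ from the coverage invariant, unroll the recurrence to $S_T\subseteq\bigcup_{t}\mathsf N_{\Delta_t}(V_t)$, apply subadditivity together with the per-round cost lemma, and close with the disjointness corollary $\sum_t|X_t|\le n$; your explicit induction for the unrolling and your remark that the per-round constant is uniform in $t$ just make precise two points the paper leaves implicit.
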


\begin{proof}
By \cref{lem:charged-region-covers-high-balls},
\[
  U_T^{(2)}\subseteq S_T.
\]
It remains to bound the area of $S_T$ and the number of lattice points it contains.

By construction,
\[
  S_t
  =
  \mathsf N_{bR_t}(S_{t-1})\cup V_t.
\]
Unrolling this recurrence and using monotonicity of $\ell_\infty$-neighborhoods,
we get
\[
  S_T
  \subseteq
  \bigcup_{t=0}^T
  \mathsf N_{\Delta_t}(V_t),
  \qquad
  \Delta_t=\sum_{s=t+1}^T bR_s.
\]
Therefore, by \cref{lem:one-round-charged-region-cost},
\[
  \area(S_T)
  \le
  \sum_{t=0}^T
  \area\bigl(\mathsf N_{\Delta_t}(V_t)\bigr)
  =
  O\left(\sum_{t=0}^T |X_t|\right).
\]
Similarly,
\[
  |S_T\cap\Z^2|
  \le
  \sum_{t=0}^T
  \left|
  \mathsf N_{\Delta_t}(V_t)\cap\Z^2
  \right|
  =
  O\left(\sum_{t=0}^T |X_t|\right).
\]
By \cref{cor:charged-sets-disjoint},
\[
  \sum_{t=0}^T |X_t|\le n.
\]
Thus
\[
  \area(S_T)=O(n)
  \qquad\text{and}\qquad
  |S_T\cap\Z^2|=O(n).
\]
Since $U_T^{(2)}\subseteq S_T$, the same bounds hold for
$U_T^{(2)}$.
\end{proof}

\subsection{The Final Candidate Set}

The low- and high-radius area bounds complete the geometric part of the size analysis.  It remains only to replace the $\ell_p$-balls by the $\ell_\infty$-squares that contain them, which allows us to construct the candidate set explicitly.

\begin{definition}[Final candidate set]\label{def:grid-candidate-set}
The grid candidate set after the final round is
\[
  \calC=
  \bigcup_{i=1}^n
  \left(
    \left(h_i+[-\rho_i^{(T)},\rho_i^{(T)}]^2\right)\cap \Z^2
  \right).
\]
\end{definition}

\begin{lemma}[Candidate-set safety]\label{lem:endpoint-safety}
For every $p\in[1,\infty]$, the radius vector $\rho^{(T)}$ is safe under
$\|\cdot\|_p$.  Consequently, $\calC$ contains the grid points used by
an optimal assignment under each such norm.
\end{lemma}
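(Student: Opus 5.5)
The plan is an induction on the round index $t$, for an arbitrary fixed $p\in[1,\infty]$. The key observation is that \cref{alg:iterated-pruning} never refers to $p$: the estimator $\widehat M_i^{\rho}(r)$ depends only on the squares $Q_j^\rho(r)$, and the test compares it with $R_t^2/4$. Hence the vector sequence $\rho^{(-1)},\rho^{(0)},\ldots,\rho^{(T)}$ is the same for every $p$. It therefore suffices to show, for each fixed $p$, that every $\rho^{(t)}$ is safe under $\|\cdot\|_p$.

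For the base case, $\rho^{(-1)}_i=R_{-1}\ge 3\sqrt n$ for all $i$. Safety is monotone under coordinatewise increase, since a witness for a smaller vector also witnesses a larger one. So \cref{lem:global-safe-radius} gives that $\rho^{(-1)}$ is safe under $\|\cdot\|_p$.

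For the inductive step, assume $\rho^{(t-1)}$ is safe under $\|\cdot\|_p$. Round $t$ uses the test radius $R_t\ge 2$, because $R_T=2$ and the radii decrease. The set of indices whose radius is reset is exactly
\[
S=\{i:R_t^2/4>\widehat M_i^{\rho^{(t-1)}}(R_t)\},
\]
and all other coordinates are unchanged. This is precisely the update in \cref{lem:simultaneous-safe-pruning}, so $\rho^{(t)}$ is safe. After $T+1$ rounds, $\rho^{(T)}$ is safe.

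For the consequence, let $\mu^*$ be a witness under $\|\cdot\|_p$. Then $\|h_i-\mu^*(i)\|_p\le\rho_i^{(T)}$. By \cref{lem:norm-comparison}, $\|h_i-\mu^*(i)\|_\infty\le\rho_i^{(T)}$, so $\mu^*(i)\in (h_i+[-\rho_i^{(T)},\rho_i^{(T)}]^2)\cap\Z^2\subseteq\calC$.

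No step is a real obstacle. The only point needing care is to state explicitly that the algorithm's output is independent of $p$, so that a single $\calC$ serves all norms. The hypothesis $r\ge2$ of \cref{lem:simultaneous-safe-pruning}, which enters through \cref{lem:lattice-growth}, also has to be checked in every round; it holds because $R_t\ge R_T=2$.
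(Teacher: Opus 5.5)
Your proposal is correct and follows essentially the same route as the paper: safety of $\rho^{(-1)}$ from \cref{lem:global-safe-radius}, preservation through each round via \cref{lem:simultaneous-safe-pruning}, and membership in $\calC$ via \cref{lem:norm-comparison}, with the $p$-independence of the algorithm giving one set $\calC$ for all norms. You also make explicit two checks the paper leaves implicit: that $R_{-1}\ge 3\sqrt n$ suffices because safety is monotone under coordinatewise increase, and that $R_t\ge 2$ holds in every round.
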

\begin{proof}
Fix $p\in[1,\infty]$.  The initial radius vector is safe under
$\|\cdot\|_p$ by \cref{lem:global-safe-radius}.  Each round applies
\cref{lem:simultaneous-safe-pruning}, so every radius vector
$\rho^{(t)}$ is safe under this norm.  In particular, $\rho^{(T)}$ is
witnessed by an optimal assignment $\mu_p^*$ satisfying
$\mu_p^*(i)\in\GridBallp(h_i,\rho_i^{(T)})$ for all $i$.  By
\cref{lem:norm-comparison}, each such grid point lies in the
$\ell_\infty$-square used in \cref{def:grid-candidate-set}.  The
construction is independent of $p$, so this conclusion holds for every
$p\in[1,\infty]$ for the same set $\calC$.
\end{proof}

\begin{lemma}[Candidate-set size]\label{lem:candidate-size}
\[
  |\calC|=O(n).
\]
\end{lemma}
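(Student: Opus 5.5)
We split the union defining $\calC$ in \cref{def:grid-candidate-set} according to the final high-radius set $H_T$ from \cref{def:high-radius-set}. For $i\notin H_T$ we have $\rho_i^{(T)}=R_T=2$, since after round $T$ every radius is one of $R_{-1},\dots,R_T$ and the non-high ones equal $R_T$. Each such square $h_i+[-2,2]^2$ contains at most $5^2=25$ grid points. These indices therefore contribute at most $25n=O(n)$ candidates in total.

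For $i\in H_T$ we must compare the $\ell_\infty$-square $h_i+[-\rho_i^{(T)},\rho_i^{(T)}]^2$ with the enlarged $\ell_p$-ball used in $U_T^{(2)}$. By \cref{lem:norm-comparison}, $\normp{x-h_i}\le 2\normI{x-h_i}\le 2\rho_i^{(T)}$ for every $x$ in this square. Hence the square is contained in $\Ballp(h_i,2\rho_i^{(T)})$. This is precisely why the charging analysis tracked the dilation factor $D=2$, so that the $\ell_\infty$-squares fit inside the high-radius region. We fix any $p\in[1,\infty]$, say $p=\infty$ or $p=1$; the argument works for any choice. Then
\[
  \bigcup_{i\in H_T}\Bigl(\bigl(h_i+[-\rho_i^{(T)},\rho_i^{(T)}]^2\bigr)\cap\Z^2\Bigr)\subseteq U_T^{(2)}\cap\Z^2 .
\]
By \cref{lem:high-radius-area}, the right-hand side has size $O(n)$.

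Adding the two contributions gives $|\calC|=O(n)$. Since all constants in \cref{lem:high-radius-area} are uniform in $p$ and the set $\calC$ itself does not depend on $p$, no further care is needed. The only step that requires any attention is the containment of the square in the doubled $\ell_p$-ball. That containment follows directly from the norm comparison, so we do not expect a real obstacle: the substantive work has already been done in \cref{lem:high-radius-area}.
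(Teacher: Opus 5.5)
Your proposal is correct and follows the paper's proof essentially step for step. You split the union by membership in $H_T$ and bound each low-radius square $h_i+[-2,2]^2$ by a constant number of lattice points. For $i\in H_T$ you use $\normp{x}\le 2\normI{x}$ to place the square inside $\Ballp(h_i,2\rho_i^{(T)})\subseteq U_T^{(2)}$, whose lattice-point count is $O(n)$ by \cref{lem:high-radius-area}.
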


\begin{proof}
Fix any $p\in[1,\infty]$ for the following geometric bound.  Partition
the contributions to $\calC$ according to whether $i\in H_T$.  If
$i\notin H_T$, then
\[
  \rho_i^{(T)}=R_T=2,
\]
so the total number of grid points contributed by these low-radius squares
is $O(n)$.

Now consider $i\in H_T$.  The square used in the definition of $\calC$ is
contained in the inflated $p$-ball:
\[
  \left(h_i+[-\rho_i^{(T)},\rho_i^{(T)}]^2\right)
  \subseteq
  \Ballp(h_i,2\rho_i^{(T)}),
\]
because $\|x\|_p\le 2\|x\|_\infty$.  Therefore the high-radius part
of $\calC$ is contained in
\[
  U_T^{(2)}\cap\Z^2.
\]
By \cref{lem:high-radius-area}, this set has size $O(n)$.  Hence
$|\calC|=O(n)$.
\end{proof}

\begin{lemma}[Candidate-set construction time]
\label{lem:candidate-construction}
The set $\calC$ can be constructed explicitly in time
\[
  O(n\log^2 n).
\]
\end{lemma}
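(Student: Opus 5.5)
The plan has two parts: running \cref{alg:iterated-pruning} to obtain $\rho^{(T)}$, and then enumerating the grid points of the union of squares in \cref{def:grid-candidate-set} without duplicates, in time $O(n\log n+|\calC|)$.

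For the pruning, there are $T+1=O(\log n)$ rounds, since $R_{-1}=\Theta(\sqrt n)$ and the radii halve down to $2$. Each round computes all $\widehat M_i^{\rho^{(t-1)}}(R_t)$ by one call to \cref{lem:square-covering}, which costs $O(n\log n)$. It then updates each $\rho_i$ in $O(1)$ time. All radii are powers of two of size $O(\sqrt n)$, and coordinates are $O(1)$-word dyadic rationals, so every square corner fits in $O(1)$ words and comparisons take constant time. Summing over rounds gives $O(n\log^2 n)$ for computing $\rho^{(T)}$.

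For the enumeration, each square $h_i+[-\rho_i,\rho_i]^2$ meets $\Z^2$ in an integer rectangle $[a_i,b_i]\times[c_i,d_i]$, where $a_i=\lceil x_i-\rho_i\rceil$ and so on; each rectangle is obtained in $O(1)$ time by floor/ceiling. The task is then to report every lattice point of a union of $n$ integer rectangles exactly once. I would sweep over the integer columns $x$. The sweep must never step through empty columns, since the coordinate spread may be huge. To handle this, I sort the $2n$ events $a_i$ and $b_i+1$ and jump between them. Within a maximal interval of columns where the active set is constant, the union of active $y$-intervals is also constant. I maintain this union as a balanced search tree of disjoint merged intervals, or more simply recompute it at each event by merging the sorted active intervals.

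The step I expect to be the obstacle is the cost of this recomputation. Recomputing the union at every event could cost $O(n)$ per event, giving $O(n^2)$ in total. I would instead charge each recomputation to output. At an event $x$, the merged union of the active intervals contains at least as many lattice points as there are distinct active rectangles, up to overlaps, and those lattice points are all reported in column $x$. Overlapping intervals could break this. A cleaner alternative avoids the issue: for each active rectangle, the column $x$ contributes at least one output point, and the union of $k$ active intervals has size at least $\max_i |[c_i,d_i]|$. That still does not bound $k$.

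So the fallback I would actually commit to is a different scheme, which reduces every rectangle to $O(1)$ simpler pieces:
\begin{enumerate}[label=(\alph*)]
  \item Generate all lattice points of all rectangles with multiplicity. The total multiplicity is $\sum_i O((\rho_i^{(T)}+1)^2)$.
  \item Sort the generated points by radix sort and remove duplicates.
\end{enumerate}
This requires bounding the total multiplicity by $O(n)$. For low-radius indices the total is clearly $O(n)$. For high-radius indices I would use the charging argument instead: in the proof of \cref{lem:dense-region-packing-lattice}, multiplicity is bounded by packing. Centers $h_i$ with a given radius $R_s$ that fail pruning have $\Omega(R_s^2)$ supporters within distance $O(R_s)$. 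A fixed lattice point therefore lies in $O(1)$ such squares of each scale, so each scale's total multiplicity is $O(1)$ times its union area. I expect this per-scale multiplicity bound, summed geometrically via \cref{lem:high-radius-area}, to be the delicate step. If it fails, I would dedupe per scale and merge across the $O(\log n)$ scales, paying $O(\log n)$ per output point.

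Sorting $O(n)$ lattice points with $O(\log n)$-bit keys takes $O(n\log n)$ time, which keeps the total within $O(n\log^2 n)$.
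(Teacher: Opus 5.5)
Your pruning-phase accounting is correct and matches the paper: $T+1=O(\log n)$ rounds, each dominated by one $O(n\log n)$ call to \cref{lem:square-covering}. The gap is in the enumeration scheme you commit to. Steps (a)--(b) need the total multiplicity $\sum_i(2\rho_i^{(T)}+1)^2$ to be $O(n)$, or at least $O(n\log n)$. Your justification is that a lattice point lies in $O(1)$ high-radius squares of each scale, and this is false. \cref{lem:dense-region-packing-lattice} bounds the area of a union by passing to a maximal separated subfamily $J$. It says nothing about the depth of the original family, and clustered centers make that depth large.

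Concretely, place the $n$ inputs at $n$ distinct points of a unit square, or all at one point. Then $\widehat M_i^{\rho}(r)=n-1$ for every $i$, every $\rho$, and every $r$, so all indices behave identically in every round. They all end with a common radius $\rho_i^{(T)}\ge 2\sqrt{n-1}$, because the test at $R_t$ passes only if $R_t^2/4>n-1$. Every lattice point near the cluster then lies in all $n$ squares. Step (a) therefore generates $\Theta(n^2)$ points, while $|\calC|=\Theta(n)$. All of this multiplicity occurs at a single scale, so your further fallback of deduplicating per scale and merging across scales does not help either.

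The column sweep you set aside is the approach that works, and it is the paper's proof. The obstacle you identified disappears once the union of active $y$-intervals is maintained dynamically rather than recomputed from the active set. Insert $[c_i,d_i]$ at event $a_i$ and delete it at event $b_i+1$. Keep the union in a dynamic interval-union structure: the paper cites Cheng and Janardan, and a segment tree on the $O(n)$ compressed endpoints with cover counts also works. A plain search tree of merged intervals cannot support deletions.

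Each of the $2n$ updates then costs $O(\log n)$. The current union can be reported in $O(\log n)$ time per reported lattice point, because every maximal piece of a union of nonempty integer intervals contains an integer. Jump directly to the next event whenever the union is empty. Every column actually processed then outputs at least one point of $\calC$, and no point is output twice. The enumeration costs $O(n\log n+|\calC|\log n)=O(n\log n)$ by \cref{lem:candidate-size}, which is dominated by the pruning phase.
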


\begin{proof}
There are $T+1=O(\log n)$ pruning rounds.  Each round computes all
values $\widehat M_i^{\rho^{(t-1)}}(R_t)$ in $O(n\log n)$ time by
\cref{lem:square-covering} and updates the radii in linear time.

It remains to enumerate the lattice points in the union of the final
squares.  Replace the square centered at $h_i=(x_i,y_i)$ with radius
$\rho_i^{(T)}$ by the integer rectangle
\[
  \begin{aligned}
  I_i^x &=
  [\lceil x_i-\rho_i^{(T)}\rceil,
   \lfloor x_i+\rho_i^{(T)}\rfloor]\cap\Z,\\
  I_i^y &=
  [\lceil y_i-\rho_i^{(T)}\rceil,
   \lfloor y_i+\rho_i^{(T)}\rfloor]\cap\Z.
  \end{aligned}
\]
This replacement preserves exactly the lattice points of the square.

Sweep the integer $x$-columns.  A rectangle inserts its interval $I_i^y$
at the first column of $I_i^x$ and deletes it immediately after the last
column.  Maintain the union of the active $y$-intervals using a standard
dynamic interval-union structure~\cite{ChengJanardan1991}.  On each
nonempty column, report the covered integer $y$-coordinates; when the
active union is empty, jump directly to the next event.  Every processed
nonempty column outputs at least one point of $\calC$, so the sweep takes
\[
  O(n\log n+|\calC|)
  =
  O(n\log n)
\]
time by \cref{lem:candidate-size}.  Together with the pruning rounds, the
total running time is $O(n\log^2 n)$.
\end{proof}

\begin{proof}[Proof of \cref{thm:general-candidate-set}]
The algorithm is the iterated-pruning procedure described above, followed by the construction of $\calC$.  Correctness follows from \cref{lem:endpoint-safety}, the size bound from \cref{lem:candidate-size}, and the running-time bound from \cref{lem:candidate-construction}.
\end{proof}
\section{The \texorpdfstring{$\ell_1$}{L1} Flow Reduction}
\label{sec:l1-flow}

Let $\rho=\rho^{(T)}$ and let $\calC$ be the candidate set constructed
in \cref{sec:iterated-pruning}.  Thus there is an optimal assignment
$\mu^*$ satisfying
\[
  \normone{h_i-\mu^*(i)}\le \rho_i
  \qquad\text{and}\qquad
  \mu^*(i)\in\calC
\]
for every $i\in[n]$.

For a point $h=(x,y)\in\R^2$, define its two vertical anchors by
\[
  \mathsf A(h)
  =
  \{(\lfloor x\rfloor,y),(\lceil x\rceil,y)\},
\]
with duplicates removed.  For an anchor $a=(k,y)$, where $k\in\Z$,
define
\[
  \mathsf N_{\calC}(a)
  =
  \{(k,\lfloor y\rfloor),(k,\lceil y\rceil)\}\cap\calC,
\]
again with duplicates removed.

\paragraph{Network construction.}
Consider the following rectilinear segments:
\begin{enumerate}[label=(\roman*)]
  \item for every pair of grid-adjacent candidates
  $q,q'\in\calC$ with $\normone{q-q'}=1$, the unit segment
  $\overline{qq'}$;
  \item for every input point $h_i$ and every
  $a\in\mathsf A(h_i)$, the horizontal segment $\overline{h_i a}$;
  \item for every $a\in\mathsf A(h_i)$ and every
  $z\in\mathsf N_{\calC}(a)$, the vertical segment $\overline{az}$.
\end{enumerate}
Take the geometric union of these segments, merge collinear overlaps,
and subdivide at every source, anchor, candidate point, and segment
intersection.  Degenerate segments are discarded.  For every resulting
elementary segment $\overline{uv}$, add the two directed arcs
\[
  u\to v
  \qquad\text{and}\qquad
  v\to u,
\]
each with capacity $n$ and cost $\normone{u-v}$.

Finally, add a sink vertex $\tau$ and, for every $q\in\calC$, an arc
\[
  q\to\tau
\]
of capacity $1$ and cost $0$.  The vertex at $h_i$ has supply $1$, and
$\tau$ has demand $n$.  Coincident geometric vertices are identified;
in particular, supplies of coincident input points are added.

\LoneFlowReduction*
\begin{proof}
\proofpart{Size and construction time.}
The grid-induced part contains at most $2|\calC|$ undirected unit
segments.  The source-to-anchor construction contributes at most $2n$
segments, and the anchor-to-grid construction contributes at most $4n$
segments.

All noncollinear intersections occur at endpoints of source or anchor
segments.  Collinear overlaps are merged before subdivision.  Hence the
resulting rectilinear arrangement has
\[
  O(n+|\calC|)=O(n)
\]
vertices and elementary segments.  The two orientations and the
$|\calC|$ sink arcs preserve this bound.  The graph can be constructed
in $O(n\log n)$ time by sorting the segments on each supporting line
and by sorting or hashing the candidate grid points.

\proofpart{Equality of optimum values.}
Let $\mu^*$ be an optimal assignment witnessing the safety of $\rho$,
and write
\[
  q_i=\mu^*(i).
\]
Choose the anchor
\[
  a_i=
  \begin{cases}
    (\lceil (h_i)_x\rceil,(h_i)_y),
      & (q_i)_x\ge (h_i)_x,\\
    (\lfloor (h_i)_x\rfloor,(h_i)_y),
      & (q_i)_x<(h_i)_x.
  \end{cases}
\]
Next choose
\[
  z_i=
  \begin{cases}
    ((a_i)_x,\lceil (h_i)_y\rceil),
      & (q_i)_y\ge (h_i)_y,\\
    ((a_i)_x,\lfloor (h_i)_y\rfloor),
      & (q_i)_y<(h_i)_y.
  \end{cases}
\]
Then
\[
  \normone{h_i-q_i}
  =
  \normone{h_i-a_i}
  +
  \normone{a_i-z_i}
  +
  \normone{z_i-q_i}.
\]
Moreover, $z_i$ and every grid vertex on a monotone grid path from
$z_i$ to $q_i$ satisfy
\[
  \normone{h_i-z}
  \le
  \normone{h_i-q_i}
  \le
  \rho_i.
\]
Consequently, all these grid vertices belong to $\calC$ by
\cref{def:grid-candidate-set}.  The network therefore contains a path
from $h_i$ to $q_i$ of cost exactly $\normone{h_i-q_i}$.  Sending one
unit along these paths and then through $q_i\to\tau$ gives a feasible
flow, because the points $q_i$ are distinct.  Thus the minimum flow
cost is at most the optimum matching cost.

Conversely, integral capacities and demands imply the existence of an
integral optimum flow.  Remove directed flow cycles and decompose the
remaining flow into unit source-to-sink paths.  Each path enters $\tau$
through an arc $q\to\tau$.  The unit capacities of these arcs imply
that the resulting grid points are distinct.  The part of the path
from its source $h_i$ to $q$ is a rectilinear walk of cost at least
\[
  \normone{h_i-q}.
\]
Hence the paths define an injective assignment whose cost is at most
the flow cost.  The two optimum values are therefore equal.

\proofpart{Separability.}
Let $\mathfrak G$ be the subgraph closure of the underlying undirected
graphs. Since deleting $\tau$ from any graph in $\mathfrak G$ leaves a planar graph, by Remark 20 and Corollary 26 of~\cite{DongGaoGoranciLeePengSachdevaYe2025}, $\mathfrak G$ is a
subgraph-closed $1/2$-separable family with $s(m)=\widetilde O(m)$.
\end{proof}

We now apply the separable min-cost-flow algorithm of
Dong et al.~\cite{DongGaoGoranciLeePengSachdevaYe2025}.

\LoneMainAlgorithm*
\begin{proof}
Construct the candidate set $\calC$ by
\cref{thm:general-candidate-set}, and construct the flow network $G$ as
above.  These steps take $O(n\log^2 n)$ and $O(n\log n)$ time,
respectively.  By \cref{thm:l1-flow-reduction}, the network has
$m=O(n)$ arcs, belongs to a subgraph-closed $1/2$-separable family with
$s(m)=\widetilde O(m)$, and has the same optimum value as \IGR.

Let
\[
  D=2^B.
\]
Every geometric arc has length at most $1$, and all of its endpoint
coordinates belong to $2^{-B}\Z$.  After multiplying all costs by $D$,
the arc costs are integers in $\{0,\ldots,D\}$.  Capacities and demands
have absolute value at most $n$.  Therefore, with
\[
  M=\max\{n,D\}=n^{O(1)},
\]
the separable min-cost-flow algorithm of
Dong et al.~\cite{DongGaoGoranciLeePengSachdevaYe2025}, with
$\alpha=1/2$, runs in expected time
\[
  \widetilde O\left(
    (m+m^{1/2+\alpha})\log M+s(m)
  \right)
  =
  \widetilde O(n).
\]

The integral-recovery step returns an integral optimum flow $f$.
Every arc not entering $\tau$ is a nondegenerate geometric arc and
therefore has strictly positive cost; moreover, $\tau$ has no outgoing
arcs.  Hence the positive-flow support of $f$ is acyclic.  Indeed, if it
contained a directed cycle, subtracting the minimum flow value on that
cycle would preserve feasibility and strictly decrease the cost.

Process this support in topological order.  At each vertex $v$, maintain
an implicit sequence $L_v$ of source labels.  Initially, $L_v$ contains
all indices $i$ such that $h_i=v$; sequences arriving on incoming arcs
are concatenated to $L_v$.  For every outgoing arc $e=(v,w)$, split off
exactly $f(e)$ labels from $L_v$ and concatenate them to $L_w$.  Flow
conservation guarantees that the required number of labels is available,
and integrality guarantees that $f(e)$ is an integer.  Whenever
$f(q,\tau)=1$, the unique label sent through $q\to\tau$ identifies the
input assigned to $q$.

An implicit treap supports each split and concatenation in $O(\log n)$
time.  Since every arc is processed once, the total postprocessing time
is $O(m\log n+n)=\widetilde O(n)$.
\end{proof}

\section{Conclusion}

We gave a randomized exact nearly-linear time algorithm for
rectilinear matching to the integer grid, which is optimal up to
polylogarithmic factors.

The core component is a single explicitly constructible set of
$O(n)$ grid points that, for every $p\in[1,\infty]$, contains the
targets of some optimal $\ell_p$ assignment.  This gives an
asymptotically optimal finite representation of \IGR.
Consequently, exact or
approximate algorithms for finite geometric partial matching can
be transferred to grid matching without the quadratic candidate
blowup of the standard nearest-grid reduction.  The rectilinear
part of the paper then exploits additional $\ell_1$ structure to
obtain a linear-size separable flow instance.

Natural questions include whether similarly optimal candidate
sets can be obtained in higher dimensions, for other periodic
target sets, or under additional capacity and obstacle
constraints.

\section{Acknowledgments}
The author thanks Jingbang Chen for various helpful discussions. The author used language-model tools to assist with editing and
presentation.  The author independently verified the mathematical
arguments, algorithmic claims, and bibliographic claims, and assumes
responsibility for all content.

\appendix

\section{A Generic Range-Tree Flow Reduction for Finite
\texorpdfstring{$\ell_1$}{L1} Matching}
\label{sec:range-tree-flow}

This appendix records the generic range-tree reduction mentioned in the
related-work discussion.  The reduction is not used in the proof of our main
$\widetilde O(n)$ result.  Its purpose is to show that the linear candidate
set alone already gives an exact $n^{1+o(1)}$-time algorithm under a
polynomial coordinate-diameter assumption, even without exploiting the
local grid structure used in \cref{sec:l1-flow}.

Let
\[
  A=(a_1,\ldots,a_k)\in(\R^2)^k
  \qquad\text{and}\qquad
  B=\{b_1,\ldots,b_m\}\subseteq\R^2,
\]
where $k\le m$.  The finite geometric partial-matching problem asks for an
injective map
\[
  \mu:[k]\hookrightarrow B
\]
minimizing
\[
  \sum_{i\in [k]}\|a_i-\mu(i)\|_1.
\]

For a sign vector
\[
  \sigma=(\sigma_x,\sigma_y)\in\{-1,+1\}^2,
\]
define
\[
  \phi_\sigma(z)
  =
  \sigma_x z_x+\sigma_y z_y.
\]
We write $b\preceq_\sigma a$ when
\[
  \sigma_x b_x\le \sigma_x a_x
  \qquad\text{and}\qquad
  \sigma_y b_y\le \sigma_y a_y.
\]
Whenever $b\preceq_\sigma a$, the two coordinate differences have the
directions specified by $\sigma$, and hence
\begin{equation}
  \label{eq:directional-l1-decomposition}
  \|a-b\|_1
  =
  \phi_\sigma(a)-\phi_\sigma(b).
\end{equation}
For every pair $(a,b)$, at least one of the four sign vectors satisfies
$b\preceq_\sigma a$.

\begin{proposition}[Range-tree flow reduction]
\label{prop:range-tree-flow-reduction}
A finite $\ell_1$ partial-matching instance on sets $A$ and $B$ reduces to a
directed minimum-cost flow instance having
\[
  O(m\log m+k)
\]
vertices and
\[
  O(m\log m+k\log^2 m)
\]
edges.  The minimum cost of a flow of value $k$ is equal to the minimum
partial-matching cost.

The reduction can be constructed in
\[
  O(m\log m+k\log^2 m)
\]
time.  If all coordinates are integers of absolute value at most $U$, then
all capacities are at most $k$ and all edge costs have absolute value
$O(U)$.
\end{proposition}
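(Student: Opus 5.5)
We build the network from four directional gadgets, one for each sign vector $\sigma\in\{-1,+1\}^2$, using the decomposition \eqref{eq:directional-l1-decomposition}. For a fixed $\sigma$ we want a directed network in which every $a\in A$ can reach every $b$ with $b\preceq_\sigma a$, and only such $b$. Every such route should cost exactly $\phi_\sigma(a)-\phi_\sigma(b)$. The standard way to get this is a two-level dominance range tree over $B$. The outer tree is a balanced binary tree on the points of $B$ sorted by $\sigma_x$-coordinate. Each outer node $u$ carries an inner balanced tree on its canonical subset sorted by $\sigma_y$-coordinate. Summing over the $O(\log m)$ outer levels, each point of $B$ appears in $O(\log m)$ inner trees, so the inner trees have $O(m\log m)$ nodes in total.

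Inside each inner tree we direct edges from parent to child and from each leaf to the corresponding point $b\in B$. Edge costs are not assigned geometrically. Instead they telescope a potential: the inner edge $x\to y$ gets cost $\pi(x)-\pi(y)$, where $\pi(b)=\phi_\sigma(b)$ on the $B$-vertices and $\pi(x)=\max_{b\in\text{subtree}(x)}\phi_\sigma(b)$ on internal nodes. This keeps all costs nonnegative, which is convenient though not needed. For each $a\in A$ we query the $\preceq_\sigma$-dominance region. It decomposes into $O(\log m)$ outer canonical nodes, and each of these into $O(\log m)$ inner canonical nodes, giving $O(\log^2 m)$ inner nodes $x$. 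For each such $x$ we add an arc $a\to x$ of cost $\phi_\sigma(a)-\pi(x)$. Any path $a\to x\to\cdots\to b$ then costs exactly $\phi_\sigma(a)-\phi_\sigma(b)=\|a-b\|_1$, and by the canonical decomposition the $b$'s reachable this way are exactly those with $b\preceq_\sigma a$.

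To finish the network, we take the union of the four gadgets while sharing the $A$-vertices and $B$-vertices. We then add a super-source $s$ with an arc $s\to a$ of capacity $1$ for each $a\in A$, and a sink $t$ with an arc $b\to t$ of capacity $1$ for each $b\in B$, and ask for a flow of value $k$. All other capacities are set to $k$. This gives $O(m\log m+k)$ vertices and $O(m\log m+k\log^2 m)$ edges. Construction uses presorting and fractional-cascading-free standard range-tree building, in $O(m\log m)$ time, plus $O(\log^2 m)$ work per query. If all coordinates are integers of absolute value at most $U$, then $\phi_\sigma$ values are $O(U)$, so all costs are $O(U)$ in absolute value.

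For correctness we argue in both directions.
\begin{itemize}
\item[(i)] \emph{Matching to flow.} Given a matching, for each pair $(a,\mu(a))$ pick a $\sigma$ with $\mu(a)\preceq_\sigma a$, which exists because each pair is dominated in some direction. Route one unit along the corresponding gadget path at cost exactly $\|a-\mu(a)\|_1$. Distinctness of the targets respects the $b\to t$ capacities.
\item[(ii)] \emph{Flow to matching.} Every $a$-to-$b$ path in any gadget costs exactly $\phi_\sigma(a)-\phi_\sigma(b)$ for some $\sigma$ with $b\preceq_\sigma a$, and this equals $\|a-b\|_1$. The network is acyclic, since the gadget edges only go downward. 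So an integral optimal flow, which exists by integrality, decomposes into $k$ unit $s$–$t$ paths. These induce an injective map whose cost equals the flow cost.
\end{itemize}

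We expect the main obstacle to be the exactness of path costs: no path may cost less than the true $\ell_1$ distance. Cheaper paths are ruled out by the telescoping potential together with the fact that only dominated $b$ are reachable. We should double-check that edges $a\to x$ use $\pi(x)$ consistently, so that negative entry costs do not break anything. Since every full path cost is still $\|a-b\|_1\ge0$ and the graph is acyclic, the argument holds.
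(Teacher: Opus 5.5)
Your proposal is correct and follows the paper's construction essentially verbatim: four $\sigma$-oriented two-dimensional dominance range trees over $B$, entry arcs from each $a$ to its $O(\log^2 m)$ canonical secondary nodes, a super-source and super-sink, and the same two-direction cost argument. The only change is your telescoping potential $\pi$, which is a potential shift of the paper's costs ($\phi_\sigma(a)$ on entry arcs, $0$ on tree arcs, $-\phi_\sigma(b)$ on leaf-to-$b$ arcs): every $a$-to-$b$ path cost is unchanged, and all arc costs become nonnegative. In particular $\phi_\sigma(a)-\pi(x)\ge 0$ for every canonical node $x$, since every $b$ below $x$ satisfies $b\preceq_\sigma a$, so your closing worry about negative entry costs does not arise.
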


\begin{proof}
We first construct a routing structure for each of the four sign vectors
$\sigma\in\{-1,+1\}^2$.

\proofpart{Dominance range trees.}
For a fixed $\sigma$, transform every point $b\in B$ into
\[
  \widetilde b_\sigma
  =
  (\sigma_x b_x,\sigma_y b_y).
\]
Build a standard two-dimensional range tree on the transformed points
$\{\widetilde b_\sigma:b\in B\}$~\cite{Lueker1978}.  Thus the primary tree
is ordered by the first transformed coordinate, and every primary-tree node
stores a secondary tree ordered by the second transformed coordinate.

For every node of every secondary tree, create a routing vertex.  Direct a
zero-cost edge of capacity $k$ from each secondary-tree node to each of its
children.  A point $b\in B$ occurs as a leaf in the secondary trees of
$O(\log m)$ primary nodes.  From every such leaf occurrence, add an edge to
a single global vertex representing $b$.  Give this edge capacity $k$ and
cost
\[
  -\phi_\sigma(b).
\]
The total number of vertices and edges in all secondary trees for one sign
vector is $O(m\log m)$.

For an input point $a$ in $A$, the set
\[
  \{b\in B:b\preceq_\sigma a\}
\]
corresponds, in the transformed coordinates, to the dominance range
\[
  (-\infty,\sigma_xa_x]\times
  (-\infty,\sigma_ya_y].
\]
A standard range-tree query decomposes this range into
$O(\log^2 m)$ pairwise-disjoint canonical subsets, each represented by a
node of a secondary tree.  For every canonical node $v$ in this
decomposition, add an edge
\[
  a\longrightarrow v
\]
of capacity $1$ and cost
\[
  \phi_\sigma(a).
\]

It follows that there is a directed path from $a$ through the
$\sigma$-structure to the global vertex $b$ exactly when
$b\preceq_\sigma a$.  Every such path has cost
\[
  \phi_\sigma(a)-\phi_\sigma(b)
  =
  \|a-b\|_1
\]
by \eqref{eq:directional-l1-decomposition}.

\paragraph{The flow network.}
Use one copy of the preceding routing structure for each of the four sign
vectors.  Add a super-source $s$ and a super-sink $t$.  For each $a\in A$,
add an edge
\[
  s\longrightarrow a
\]
of capacity $1$ and cost $0$.  For each $b\in B$, add an edge
\[
  b\longrightarrow t
\]
of capacity $1$ and cost $0$.  We ask for a minimum-cost $s$--$t$ flow of
value $k$.

Given an injective matching $\mu:A\hookrightarrow B$, consider each pair
$(a,\mu(a))$.  Choose a sign vector $\sigma$ for which
$\mu(a)\preceq_\sigma a$, and route one unit of flow through the
corresponding range-tree path.  The edges into $t$ have capacity $1$, and
the matched points $\mu(a)$ are distinct, so the resulting flow is feasible.
Its cost is exactly
\[
  \sum_{a\in A}\|a-\mu(a)\|_1.
\]

Conversely, integral capacities imply the existence of an integral optimum
flow.  Decompose such a flow of value $k$ into $k$ source-to-sink paths.
Since every edge $s\to a$ has capacity $1$, and there are exactly $k$ such
edges, every point $a\in A$ starts exactly one path.  Each path enters $t$
through an edge $b\to t$, and the unit capacity of these edges ensures that
the resulting points $b$ are distinct.  The paths therefore define an
injective matching from $A$ to $B$.  Every path through a sign-$\sigma$
routing structure has cost exactly the $\ell_1$ distance between its endpoints,
so the matching cost equals the flow cost.

For each sign vector, the range-tree routing structure contains
$O(m\log m)$ vertices and edges.  The dominance queries for all points of
$A$ contribute $O(k\log^2 m)$ additional edges.  The four sign vectors
change these bounds only by a constant factor.  The source and sink edges
contribute another $O(k+m)$ edges.  This proves the stated size and
construction-time bounds.

Finally, if every coordinate has absolute value at most $U$, then
\[
  |\phi_\sigma(z)|\le 2U
\]
for every point $z$ and every sign vector $\sigma$.  Hence all nonzero
edge costs have absolute value at most $2U$, and every capacity is at most
$k$.
\end{proof}

\begin{corollary}[Algorithm for finite bipartite $\ell_1$ matching]
\label{cor:finite-range-tree-matching}
Let $N=|A|+|B|$.  If the coordinates of $A\cup B$ are integers of absolute
value $N^{O(1)}$, then finite geometric partial matching under the $\ell_1$
norm can be solved exactly in
\[
  N^{1+o(1)}
\]
time with high probability.
\end{corollary}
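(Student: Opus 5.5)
The corollary follows by feeding the network of \cref{prop:range-tree-flow-reduction} to the almost-linear-time minimum-cost flow algorithm of Chen et al.~\cite{ChenKyngLiuPengGutenbergSachdeva2022}, so the plan is to check its input requirements and recover a matching from its output.

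First I will normalize the instance. If $|A|\le|B|$, set $k=|A|$ and $m=|B|$, so that $k\le m$ and $N=k+m$. If $|A|>|B|$, no injective partial matching of all of $A$ exists, so the intended problem is a matching of size $\min(|A|,|B|)$. In that case I swap the roles of the two sides, or equivalently ask for a flow of value $\min(|A|,|B|)$, and the construction is symmetric.

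Next I apply \cref{prop:range-tree-flow-reduction} with $U=N^{O(1)}$. This produces a directed network with $O(m\log m+k\log^2 m)=O(N\log^2 N)$ vertices and edges in the same time. Its capacities are integers at most $k\le N$, and its costs are integers of absolute value $O(U)=N^{O(1)}$. To impose the flow value $k$, I give supply $k$ to $s$ and demand $k$ to $t$; alternatively, I add a return arc $t\to s$ of capacity $k$ and sufficiently negative cost. The negative costs $-\phi_\sigma(b)$ should cause no trouble, since the solver of Chen et al.\ handles arbitrary integer costs bounded polynomially in absolute value. If one prefers to avoid negative arcs, the network is acyclic: arcs go from $s$ to $A$, down the secondary trees, to the vertices of $B$, and then to $t$. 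So I can shift each $a\to v$ arc by $+2U$ and each leaf-to-$b$ arc by $+2U$. Every $s$--$t$ path uses exactly one arc of each kind, so every such path gains the same additive constant $4U$ and the set of optimal flows is unchanged.

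Finally I run the solver. On $m'=O(N\log^2N)$ edges with polynomially bounded integers it returns an exact integral optimum flow in $m'^{1+o(1)}=N^{1+o(1)}$ time with high probability. By \cref{prop:range-tree-flow-reduction}, the optimal flow value equals the optimal matching cost. To output the matching itself, I decompose the integral flow into paths. Since the network is a DAG, I can propagate source labels in topological order, as in the proof of \cref{thm:main-algorithm}, using implicit treaps; this costs $O(m'\log N)$ time. The main point requiring care is confirming that the solver's hypotheses (integrality, polynomial bounds, exact integral output) are met; the size and cost bounds above take care of this. The rest is routine bookkeeping.
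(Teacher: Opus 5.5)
Your proposal is correct and takes the same route as the paper: apply \cref{prop:range-tree-flow-reduction} to get an $O(N\log^2 N)$-size network with polynomially bounded integer costs and capacities, then invoke the almost-linear-time min-cost flow algorithm of Chen et al.\ for a total of $(N\polylog N)^{1+o(1)}=N^{1+o(1)}$ time. Your extra details are sound and fill in points the paper leaves implicit: the optional $+2U$ cost shift (valid because every $s$--$t$ path in the DAG uses exactly one $a\to v$ arc and one leaf-to-$b$ arc) and the label-propagation recovery of the matching.
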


\begin{proof}
By \cref{prop:range-tree-flow-reduction}, the resulting flow graph has
\[
  O(N\log^2 N)
  =
  N\polylog N
\]
vertices and edges.  Its costs, capacities, and demands are polynomially bounded
integers.  Applying the almost-linear-time minimum-cost flow algorithm of
Chen et al.~\cite{ChenKyngLiuPengGutenbergSachdeva2022} gives running time
\[
  (N\polylog N)^{1+o(1)}
  =
  N^{1+o(1)}.
\]
\end{proof}

\begin{corollary}[Alternative exact algorithm for \IGR]
\label{cor:range-tree-grid-matching}
In the word-RAM model with word size $w=\Theta(\log n)$,
suppose that there is a global parameter $B=O(\log n)$
such that every input coordinate belongs to $2^{-B}\mathbb Z$
and its exact scaled representation fits in $O(1)$ machine
words.
Then \IGR under the $\ell_1$ norm can be solved exactly in
$n^{1+o(1)}$ time.
\end{corollary}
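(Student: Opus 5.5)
The plan is to combine \cref{thm:general-candidate-set} with \cref{cor:finite-range-tree-matching}, after normalizing coordinates so that the polynomial-diameter hypothesis of that corollary holds.

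\textbf{Step 1: normalization.} First apply the coordinate normalization of \cref{sec:coordinate-normalization}. This splits the input into components that can be solved independently. Within each component, every integer part is representable with $O(\log n)$ bits after a translation by an integer vector, and such a translation preserves $\Z^2$ and $\ell_1$ distances. Next multiply all coordinates by $D=2^B=n^{O(1)}$, so that every input coordinate becomes an integer. The grid $\Z^2$ maps to $D\Z^2$, and all $\ell_1$ costs scale by $D$, so optimal assignments are preserved.

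\textbf{Step 2: candidate set.} On each component, run the construction of \cref{thm:general-candidate-set} with $p=1$. This takes $O(n_c\log^2 n_c)$ time and yields a candidate set $\calC_c$ of size $O(n_c)$. By \cref{lem:endpoint-safety}, some optimal assignment of the component uses only targets in $\calC_c$. Each candidate lies within $\ell_\infty$-distance $O(\sqrt{n})$ of an input point. After translation and scaling, all coordinates of the inputs and candidates are therefore integers of absolute value $n^{O(1)}$.

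\textbf{Step 3: solve and combine.} Run the finite $\ell_1$ partial matching of \cref{cor:finite-range-tree-matching} with $A$ the component's inputs and $B=\calC_c$, so $N=O(n_c)$. This costs $n_c^{1+o(1)}$ time with high probability.
\begin{itemize}
\item The restriction to $\calC_c$ loses no optimality, so the result is an optimal assignment for the component.
\item Components from the normalization can be solved separately and their assignments combined. I expect this is exactly the guarantee proved in \cref{sec:coordinate-normalization}, namely that no optimal assignment interacts across components.
\item Summing $n_c^{1+o(1)}$ over components gives $n^{1+o(1)}$ total time.
\end{itemize}

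\textbf{Main obstacle.} The delicate step is ensuring that the polynomial-magnitude bound holds uniformly. Without normalization, the raw coordinates fit in $O(1)$ words but may have magnitude far exceeding any polynomial in $n$ relative to the component diameter. The normalization is what reduces the relevant coordinates to $O(\log n)$ bits. I would rely on the appendix's normalization lemma for the claim that components separated by more than $\Omega(n)$ can be handled independently, since an optimal assignment moves each point at most $O(\sqrt n)$. I would then check that the translated, scaled candidate coordinates remain bounded by $n^{O(1)}$, which follows from component diameter $O(n^2)$ times $D=n^{O(1)}$.
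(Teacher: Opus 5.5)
Your core argument is the paper's. You build $\calC$ by \cref{thm:general-candidate-set}, scale by $D=2^B$, use that every candidate lies within $\ell_\infty$-distance $R_{-1}=O(\sqrt n)$ of an input to get polynomially bounded integer coordinates, and invoke \cref{cor:finite-range-tree-matching}. The difference is your Step~1, which rests on a misreading of the hypothesis. A scaled coordinate stored exactly in $O(1)$ words of $\Theta(\log n)$ bits is an integer of absolute value $2^{O(\log n)}=n^{O(1)}$. So $DP$ and $D\calC$ already have magnitude $n^{O(1)}$ without any normalization, and this is exactly how the paper argues. The normalization of \cref{sec:coordinate-normalization} is what justifies that word-size hypothesis for arbitrary inputs; once the hypothesis is assumed, it is not needed.

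The detour is also not quite harmless as you wrote it. You apply \cref{cor:finite-range-tree-matching} per component with $N=O(n_c)$, which requires coordinates of magnitude $N^{O(1)}=n_c^{O(1)}$. After scaling by $D=n^{O(1)}$ you only have the bound $n^{O(1)}$. For small components (e.g.\ $n_c=O(1)$) the corollary's hypothesis therefore fails, and the claimed per-component bound $n_c^{1+o(1)}$ is not what it gives. Likewise, a success guarantee that holds with high probability in $N=O(n_c)$ does not survive a union bound over up to $n$ tiny components. Both issues are repairable, for instance by using the $\log U$ cost dependence of the underlying flow solver or by solving constant-size components directly. The cleanest fix is to drop the decomposition and apply \cref{cor:finite-range-tree-matching} once to $DP$ and $D\calC$ with $N=|P|+|\calC|=O(n)$, as the paper does.
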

\begin{proof}
Construct the candidate set $\calC$ using
\cref{thm:general-candidate-set}.  It has size $O(n)$ and
contains the endpoints of an optimal assignment.

Let $D=2^B$.  Since every scaled input coordinate is stored in
$O(1)$ words of $\Theta(\log n)$ bits, every coordinate of $DP$
has magnitude $n^{O(1)}$.  Moreover, every $q\in\calC$ lies in
an $\ell_\infty$-square of radius at most
\[
  R_{-1}=O(\sqrt n)
\]
centered at an input point.  Hence every coordinate of $D\calC$
also has magnitude $n^{O(1)}$.

Scaling multiplies every matching cost by the common factor $D$
and preserves the optimum assignment.  We may therefore apply
\cref{cor:finite-range-tree-matching} to
\[
  \widetilde P=DP
  \qquad\text{and}\qquad
  \widetilde \calC=D\calC.
\]
Since $|\widetilde P|+|\widetilde \calC|=O(n)$, the running time is $n^{1+o(1)}$.
\end{proof}

\section{Coordinate Normalization}\label{sec:coordinate-normalization}

Choose an integer constant $K>6$, and set $\Delta=Kn$.  Partition the plane into half-open
$\Delta\times\Delta$ cells, and connect two occupied cells whenever their
cell indices differ by at most one in each coordinate.  Let
$P_1,\ldots,P_s$ be the sets of input points in the resulting connected
components.  Any two distinct components are at $\ell_\infty$-distance
greater than $\Delta$.  On the other hand, for a component $P_j$ of size
$n_j$, \cref{lem:global-safe-radius} guarantees an optimal assignment in
which every point moves by at most
$3\sqrt{n_j}\le3\sqrt n<\Delta/2$.  Consequently, optimal assignments
chosen independently for different components use disjoint grid points.
Since the restriction of any feasible global assignment to $P_j$ is a
feasible assignment for $P_j$, the optimum of the original instance is
exactly the sum of the component optima.

Each component contains at most $n$ occupied cells, and hence has coordinate
diameter $O(n\Delta)=O(n^2)$.  We may therefore translate each component
by an integer vector, without changing any assignment cost or feasibility
condition, so that all of its integer coordinate parts have
$O(\log n)$ bits.  Together with the assumed $O(\log n)$ fractional
precision, every normalized coordinate is representable in $O(1)$ machine
words.  The components of occupied cells and their translations can be found
by hashing the occupied cells and traversing their constant-degree
adjacency graph, in expected linear time in the input representation.  We henceforth
suppress this decomposition and normalization and assume that the input
coordinates are already normalized.

\section{Target-Restricted Variants}\label{sec:endpoint-restrictions}

We briefly record two variants that often arise in real-world applications.

\begin{proposition}[Finitely many unavailable grid points]\label{prop:unavailable-grid-points}
Let $F\subseteq\Z^2$ be a finite set of $m$ unavailable grid points.  The variant in which the assignment must avoid $F$ reduces to the unrestricted problem on $n+m$ input points.  Consequently, under the same assumption as in \cref{thm:main-algorithm}, for the $\ell_1$ norm, there is a randomized algorithm that outputs an optimal assignment in expected $\widetilde O(n+m)$ time.
\end{proposition}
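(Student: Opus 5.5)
The approach is to add one dummy input point for each forbidden site and argue that some optimal solution of the enlarged unrestricted instance parks every dummy on its own forbidden site.

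Let $F=\{f_1,\ldots,f_m\}$. Form the instance $P'=(h_1,\ldots,h_n,f_1,\ldots,f_m)$ of \IGR under $\ell_1$, placing dummy input $n+k$ exactly at $f_k$. Any feasible assignment $\nu$ for the restricted problem, with $\nu(i)\notin F$, extends to a feasible assignment of $P'$ by sending $n+k\mapsto f_k$. This extension has the same cost, since every dummy moves by $0$. Hence $\mathrm{OPT}(P')\le\mathrm{OPT}_F(P)$.

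For the converse, take an optimal assignment $\mu'$ of $P'$ and normalize it by an exchange argument, so that every dummy sits at its own site while the cost does not increase.

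\begin{enumerate}[label=(\roman*)]
\item Suppose $\mu'(n+k)\neq f_k$. If $f_k$ is unused, move the dummy onto $f_k$. This strictly decreases the cost, contradicting optimality unless the dummy was already at distance $0$, which is impossible since $\mu'(n+k)\neq f_k$.
\item Otherwise $f_k$ is used by some index $j$. Swap the targets of $j$ and $n+k$. The new cost of the pair is
\[
\normone{h_j-\mu'(n+k)}+0\le \normone{h_j-f_k}+\normone{f_k-\mu'(n+k)}
\]
by the triangle inequality, and the right-hand side is exactly the old cost of the pair. So the cost does not increase.
\end{enumerate}

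Each swap strictly increases the number of dummies sitting at their own site. If $j$ was itself a dummy $n+l$, it was not at home, because its home is $f_l\neq f_k$. So the process terminates, and it ends with an optimal assignment in which every dummy is home. The real points then avoid $F$ by injectivity, which gives $\mathrm{OPT}_F(P)\le\mathrm{OPT}(P')$. Reading off the first $n$ coordinates recovers an optimal restricted assignment.

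For the algorithm, apply \cref{thm:main-algorithm} to $P'$, which has $n+m$ points. The sites $f_k$ are integers, so they satisfy the dyadic precision assumption. If their integer parts are large, the coordinate normalization of \cref{sec:coordinate-normalization} still applies, since it handles any input. The expected running time is therefore $\widetilde O(n+m)$, with $O(n+m)$ extra time to output only the real points.

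The main obstacle is correctness of the exchange normalization. The key facts are that swaps never increase the cost, and that the count of home dummies strictly increases so the process terminates. A subtle point is that \cref{thm:main-algorithm} returns some optimal assignment, not necessarily a normalized one. I would therefore apply the exchange procedure to its output as a postprocessing step. The procedure needs at most $m$ swaps, each found via a hash map from grid points to their users, so the postprocessing costs expected $O(n+m)$ time.
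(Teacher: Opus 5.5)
Your proposal is correct and follows the paper's proof essentially step for step. Both place a dummy input on each forbidden site, extend restricted assignments at zero extra cost, and use the triangle-inequality exchange that parks each dummy at its site without disturbing dummies already fixed. Your explicit remark that the exchange must be applied as postprocessing to the solver's output, in expected $O(n+m)$ time, usefully spells out a step the paper leaves implicit when it compares only optimal values.
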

\begin{proof}
For every unavailable grid point $b\in F$, add a dummy input point $d_b$ located at $b$, and solve the unrestricted instance on
\[
  P'=P\uplus\{d_b:b\in F\}.
\]
Any feasible assignment of the original points to $\Z^2\setminus F$ extends to an assignment of $P'$ with the same cost by sending each dummy point $d_b$ to the grid point $b$.

Conversely, take any injective assignment of $P'$.  We show that it can be transformed, without increasing its cost, into one that sends every dummy point $d_b$ to $b$.  Process the dummy points one at a time.  If $d_b$ is assigned to some grid point $q\ne b$, move it to $b$.  If some point $x$ was previously assigned to $b$, reassign $x$ to $q$.  This preserves injectivity and does not disturb any dummy point that has already been fixed.  The cost of the dummy point decreases by $\normone{b-q}$, while the cost of $x$ increases by at most $\normone{b-q}$ by the triangle inequality.  Thus the total cost does not increase.  Repeating this operation fixes all dummy points.  After the dummy points occupy the unavailable grid points, the original points are assigned injectively to $\Z^2\setminus F$.

The optimal values of the restricted and augmented unrestricted instances are therefore equal.  Applying \cref{thm:main-algorithm} to the augmented instance gives the claimed running time.
\end{proof}

\begin{proposition}[Thick rectangular window]\label{prop:rectangular-window}
There is a constant $K>0$ such that the following holds.  Let
\[
  R=[\ell_x,u_x]\times[\ell_y,u_y]
\]
be an axis-aligned rectangle with integer endpoints whose width and height are both at least $K\sqrt n$, and let $S_R=R\cap\Z^2$. The input points themselves need not lie in $R$.  For the $\ell_1$ norm, the variant in which all grid points used by the assignment must lie in $S_R$ can be solved exactly in expected $\widetilde O(n)$ time under the same assumption as in \cref{thm:main-algorithm}.
\end{proposition}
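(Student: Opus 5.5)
The plan is to reduce the windowed problem to a variant of the unrestricted pipeline in which every input lies inside $R$. After that, I rerun the pruning analysis with a weaker lower bound on grid-point counts. The key observation is that $\ell_1$ is coordinate-separable.

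For each $h_i=(x_i,y_i)$, let $c_i$ be its coordinatewise clamp to $R$:
\[
c_i=\bigl(\min\{\max\{x_i,\ell_x\},u_x\},\ \min\{\max\{y_i,\ell_y\},u_y\}\bigr).
\]
For every $z\in R$ and each coordinate, $|(h_i)_x-z_x|=|(h_i)_x-(c_i)_x|+|(c_i)_x-z_x|$. Hence $\normone{h_i-z}=\normone{h_i-c_i}+\normone{c_i-z}$. So every assignment into $S_R$ has cost $\sum_i\normone{h_i-c_i}$ plus its cost for the clamped inputs $c_i$. It therefore suffices to solve the instance $(c_1,\ldots,c_n)\subseteq R$ with targets $S_R$. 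The $c_i$ are still dyadic with $O(1)$-word coordinates, and existence of an optimum follows exactly as before.

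Next I would redo \cref{sec:safe-pruning} and \cref{sec:iterated-pruning} with targets restricted to $S_R$. The only place the full grid is used is \cref{lem:lattice-growth}, which I replace as follows. For $c\in R$ and $2\le r\le K\sqrt n/2$, some quadrant square $c+[0,\pm r/2]\times[0,\pm r/2]$ lies in $R$. This holds because each side of $R$ has length at least $K\sqrt n\ge r$, so in each coordinate one of the two directions has room $r/2$. Consequently $|\GridBallOne(c,r)\cap S_R|\ge \lfloor r/2\rfloor^2\ge r^2/16$.

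With this bound, the analogue of \cref{lem:global-safe-radius} gives a safe radius $C_0\sqrt n$ for an absolute constant $C_0$. Choosing $K\ge 2C_0+8$ keeps every test radius in the valid range. The pruning rule \cref{lem:safe-pruning} and its simultaneous version go through verbatim with threshold $r^2/16$ in place of $r^2/4$; the exchange argument only uses unused targets in $S_R$. The dichotomy \cref{lem:failed-pruning-dichotomy} then yields $|T_i|\ge R_{t-1}^2/64$, which only changes $\eta$ in \cref{lem:dense-region-packing-lattice}. The charging argument is purely geometric and uses only this count, so it is unchanged. Intersecting the final squares with $R$ gives a candidate set $\calC_R\subseteq S_R$ of size $O(n)$ in $O(n\log^2 n)$ time. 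Enumeration is done by clipping each integer rectangle to $R$ before the sweep.

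Finally, I would build the flow network of \cref{sec:l1-flow} on the clamped points $c_i$ and $\calC_R$. Because $R$ has integer endpoints and $c_i\in R$, the anchors and the points $\lfloor\cdot\rfloor,\lceil\cdot\rceil$ of $c_i$ lie in $R$. The monotone grid path from $z_i$ to $q_i$ stays in the bounding box of $c_i$ and $q_i$, hence inside $R$. It also stays within distance $\rho_i$ of $c_i$, so all its vertices are in $\calC_R$. The equality-of-optima and separability arguments of \cref{thm:l1-flow-reduction} then apply unchanged. Solving the flow instance as in \cref{thm:main-algorithm} and adding back $\sum_i\normone{h_i-c_i}$ gives the result in expected $\widetilde O(n)$ time.

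I expect the main obstacle to be the uniform lower bound on $|\GridBallOne(c,r)\cap S_R|$ near corners of $R$, together with checking that every radius ever tested, including the initial one, stays below the window thickness; this is what dictates the choice of $K$. A secondary point to verify is that the coordinate normalization of \cref{sec:coordinate-normalization} is applied to the clamped points rather than the original ones.
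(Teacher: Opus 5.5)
Your proposal is correct and follows essentially the same route as the paper. Both clamp each input to $R$ using the separability of $\ell_1$, rerun the pruning with targets restricted to $S_R$ after re-establishing a lattice lower bound inside the window, intersect the candidate set with $S_R$, build the same flow network on the clamped points, and add back $\sum_i\normone{h_i-c_i}$. The only difference is in constants: the paper keeps the original threshold $r^2/4$ and test radii by asserting $|\BallOne(h,r)\cap S_R|\ge r^2/4$ for $h\in R$, which holds because a whole quadrant of the $\ell_1$ ball (containing at least $r(r-1)/2$ grid points) fits in $R$ once $K\ge 12$. Your inscribed-square bound $r^2/16$ instead forces you to retune the threshold, $R_{-1}$, and $\eta$, which is harmless; and your explicit check that the monotone grid paths stay inside $\calC_R$ supplies a step the paper leaves implicit.
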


\begin{proof}
Let $\bar h_i$ be the coordinatewise projection of $h_i$ onto $R$.  If $h_i\in R$, then $\bar h_i=h_i$; otherwise, $\bar h_i$ lies on the boundary of $R$, possibly at a corner.  For every $q\in S_R$,
\[
  \normone{h_i-q}
  =
  \normone{h_i-\bar h_i}+\normone{\bar h_i-q}.
\]
The first term is independent of $q$, so it can be added to the objective after solving the assignment problem for the projected points $\bar h_i$.

We next run the pruning algorithm on the projected points but restrict
candidate endpoints to $S_R$.  Use the radii
$R_{-1},R_0,\ldots,R_T$ from \cref{sec:radii-seq}.  Every such radius
$r$ is a power of two and satisfies
\[
  2\le r\le R_{-1}<6\sqrt n.
\]
We claim that, for large enough $K$, for every $h\in R$ and every one of these radii,
\[
  |\BallOne(h,r)\cap S_R|\ge \frac{r^2}{4}.
\]
At $r=R_{-1}$, this lower bound is at least $9n/4>n$.  Thus no optimal
assignment to $S_R$ can send an input farther than $R_{-1}$: otherwise
every grid point in this ball would have to be occupied, which is
impossible with only $n$ inputs.  Hence the initial radius vector is safe.  At every
test radius $R_t$, the same $r^2/4$ lower bound makes the safe-pruning
proofs in \cref{lem:safe-pruning,lem:simultaneous-safe-pruning} valid with
$S_R$ in place of $\Z^2$.  The high-radius charging proof then applies
unchanged.  The final candidate set $\calC^R$ is obtained by intersecting
the usual candidate set with $S_R$.  It has size $O(n)$ and contains the
endpoints of an optimal assignment of the projected points to $S_R$.

We construct the graph from \cref{sec:l1-flow} for the projected points $\bar h_i$, using $\calC^R$ as the candidate set, solve the resulting instance with the minimum-cost flow algorithm of Dong et al.~\cite{DongGaoGoranciLeePengSachdevaYe2025}, and add $\sum_{i=1}^n\normone{h_i-\bar h_i}$ to its optimal value.
\end{proof}

\end{document}